\newcommand{\subparagraph}{}
\newcommand{\myhash}{%
	{\settoheight{\dimen0}{C}\kern-.05em\, \resizebox{!}{\dimen0}{\raisebox{\depth}{\#}}}}
\newcommand{\Sigmah}{{\Sigmam}_{\hv}}
\newcommand{\Sigmay}{{\Sigmay}_{\yv}}
\def\rect{{\ttr\tte\ttc\ttt}}
\def\xim{\boldsymbol{\xi}}
\def\wh{\widehat}
\def\mindex#1{\index{#1}}
\def\sq{\hbox{\rlap{$\sqcap$}$\sqcup$}}
\def\qed{\ifmmode\sq\else{\unskip\nobreak\hfil
\penalty50\hskip1em\null\nobreak\hfil\sq
\parfillskip=0pt\finalhyphendemerits=0\endgraf}\fi\medskip}
\long\def\defbox#1{\framebox[.9\hsize][c]{\parbox{.85\hsize}{%
\parindent=0pt
\baselineskip=12pt plus .1pt      
\parskip=6pt plus 1.5pt minus 1pt 
 #1}}}
\long\def\beginbox#1\endbox{\subsection*{}%
\hbox{\hspace{.05\hsize}\defbox{\medskip#1\bigskip}}%
\subsection*{}}
\def\endbox{}
\def\diag{{\text{diag}}}
\def\tr{\mathsf{tr}}
\def\supp{{\rm supp\,}}
\newsavebox{\junk}
\savebox{\junk}[1.6mm]{\hbox{$|\!|\!|$}}
\def\det{{\mathop{\rm det}}}
\def\argmin{\mathop{\rm arg\, min}}
\def\bC{{\mathbb C}}
\def\bE{{\mathbb E}}
\def\bR{{\mathbb R}}
\def\bfA{{\bf A}}
\def\bfI{{\bf I}}
\def\bfS{{\bf S}}
\def\bfU{{\bf U}}
\def\bfY{{\bf Y}}
\def\bfa{{\bf a}}
\def\bfb{{\bf b}}
\def\bff{{\bf f}}
\def\bfh{{\bf h}}
\def\bfu{{\bf u}}
\def\bfy{{\bf y}}
\def\bfz{{\bf z}}
\def\scrC{{\mathscr{C}}}
\def\scrD{{\mathscr{D}}}
\def\ttc{{\mathtt c}}
\def\tte{{\mathtt e}}
\def\ttr{{\mathtt r}}
\def\ttt{{\mathtt t}}
\def\sfF{{\sf F}}
\def\sfH{{\sf H}}
\def\bfmath#1{{\mathchoice{\mbox{\boldmath$#1$}}%
{\mbox{\boldmath$#1$}}%
{\mbox{\boldmath$\scriptstyle#1$}}%
{\mbox{\boldmath$\scriptscriptstyle#1$}}}}
\def\bfmY{\bfmath{Y}}
\def\bfmhhaY{\bfmath{\hhaY}} 
\def\bfmhhaY{\hbox to 0pt{$\widehat{\bfmY}$\hss}\widehat{\phantom{\raise 1.25pt\hbox{$\bfmY$}}}}
\def\til={{\widetilde =}}
\def\clC{{\cal C}}
\def\clG{{\cal G}}
\def\clM{{\cal M}}
\def\clN{{\cal N}}
\def\clU{{\cal U}}
 \def\FRAC#1#2#3{\genfrac{}{}{}{#1}{#2}{#3}}
\def\ddtp{{\mathchoice{\FRAC{1}{d^{\hbox to 2pt{\rm\tiny +\hss}}}{dt}}%
{\FRAC{1}{d^{\hbox to 2pt{\rm\tiny +\hss}}}{dt}}%
{\FRAC{3}{d^{\hbox to 2pt{\rm\tiny +\hss}}}{dt}}%
{\FRAC{3}{d^{\hbox to 2pt{\rm\tiny +\hss}}}{dt}}}}
\def\average#1,#2,{{1\over #2} \sum_{#1}^{#2}}
\def\eye(#1){{\bf(#1)}\quad}
\newtheorem{theorem}{{\bf Theorem}}
\newtheorem{proposition}{{\bf Proposition}}
\def\eq#1/{(\ref{e:#1})}
\newcommand{\inp}[2]{{\langle #1, #2 \rangle}}
\newcommand{\beqn}[1]{\notes{#1}%
\begin{eqnarray} \elabel{#1}}
\newcommand{\eeqn}{\end{eqnarray} }
\newcommand{\beq}[1]{\notes{#1}%
\begin{equation}\elabel{#1}}
\newcommand{\eeq}{\end{equation}}
\def\bdes{\begin{description}}
\def\edes{\end{description}}
\newcounter{rmnum}
\newcounter{anum}
\def\ass(#1:#2){(#1\ref{#1:#2})}
\def\ritem#1{
\item[{\sf \ass(\current_model:#1)}]
}
\newenvironment{recall-ass}[1]{%
\begin{description}
\def\current_model{#1}}{
\end{description}
}
\newtheorem{problem}{\bf Problem}
\pgfplotsset{compat=newest}
\def\herm{{\sfH}}
\newcommand{\SigmaS}{\Sigmam_{\hv} (\uv)}
\newcommand{\fvex}{f_{\text{vex}}}
\newcommand{\fcav}{f_{\text{cav}}}
\def\cg{{\clC\clN}} 
\long\def\comment#1{}
\newcommand{\av}{{\bf a}}
\newcommand{\bv}{{\bf b}}
\newcommand{\cv}{{\bf c}}
\newcommand{\hv}{{\bf h}}
\newcommand{\rv}{{\bf r}}
\newcommand{\uv}{{\bf u}}
\newcommand{\yv}{{\bf y}}
\newcommand{\zv}{{\bf z}}
\newcommand{\Dm}{{\bf D}}
\newcommand{\Sm}{{\bf S}}
\newcommand{\Um}{{\bf U}}
\newcommand{\Wm}{{\bf W}}
\newcommand{\Ym}{{\bf Y}}
\newcommand{\Ac}{{\cal A}}
\newcommand{\Gc}{{\cal G}}
\newcommand{\tauv}{\hbox{\boldmath$\tau$}}
\newcommand{\xiv}{\hbox{\boldmath$\xi$}}
\newcommand{\Gammam}{\hbox{\boldmath$\Gamma$}}
\newcommand{\Lambdam}{\hbox{\boldmath$\Lambda$}}
\newcommand{\Sigmam}{\hbox{\boldmath$\Sigma$}}
\newcommand{\Psim}{\hbox{\boldmath$\Psi$}}
\newcommand{\Xim}{\hbox{\boldmath$\Xi$}}
\renewcommand{\det}{{\hbox{det}}}
\newcommand{\trace}{{\hbox{tr}}}
\renewcommand{\arg}{{\hbox{arg}}}
\newcommand{\transp}{{\sf T}}
\renewcommand{\vec}{{\rm vec}}
\title{Structured Channel Covariance Estimation from Limited Samples in Massive MIMO}
\author{Mahdi Barzegar Khalilsarai, Tianyu Yang, Saeid Haghighatshoar,  and Giuseppe Caire  
\thanks{The authors are with the Communications and Information Theory Group (CommIT), Technische Universit\"{a}t Berlin (\{m.barzegarkhalilsarai, tianyu.yang, saeid.haghighatshoar, caire\}@tu-berlin.de).}
}
\begin{document}

\maketitle

\def\ful{f_\text{ul}}
\def\fdl{f_\text{dl}}
\def\asfc{\scrC}
\def\asful{\scrC_\text{ul}}
\def\asfdl{\scrC_\text{dl}}

\begin{abstract}
Obtaining channel covariance knowledge is of great importance in various Multiple-Input Multiple-Output MIMO communication applications, including channel estimation and covariance-based user grouping. In a massive MIMO system, covariance estimation proves to be challenging due to the large number of antennas ($M\gg 1$) employed in the base station and hence, a high signal dimension. In this case, the number of pilot transmissions $N$ becomes comparable to the number of antennas and standard estimators, such as the sample covariance, yield a poor estimate of the true covariance and are undesirable. In this paper, we propose a Maximum-Likelihood (ML) massive MIMO covariance estimator, based on a parametric representation of the channel \textit{angular spread function} (ASF). The parametric representation emerges from super-resolving discrete ASF components via the well-known MUltiple SIgnal Classification (MUSIC) method plus approximating its continuous component using suitable limited-support density function. We maximize the likelihood function using a concave-convex procedure, which is initialized via a non-negative least-squares optimization problem. Our simulation results show that the proposed method outperforms the state of the art in various estimation quality metrics and for different sample size to signal dimension ($N/M$) ratios.
\end{abstract}

\begin{keywords}
Massive MIMO, covariance estimation, MUSIC, maximum likelihood, non-negative least squares.
\end{keywords}

\section{Introduction}
Knowledge of Uplink (UL) and Downlink (DL) channel covariance matrices of the users yields crucial system-level and computational advantages in MIMO systems and especially in massive MIMO where the number of antennas (thus, the signal dimension) is large ($M\gg 1$) \cite{khalilsarai2018fdd, boroujerdi2018low,haghighatshoar2018low, haghighatshoar2017massive, adhikary2013joint}. Of course, if one has a significantly large number of i.i.d. samples of the user channel vector, one can estimate the covariance matrix precisely. Unfortunately, this is barely the case in massive MIMO since due to the large number of Base Station (BS) antennas the number of i.i.d. channel samples is of the order of the signal dimension. In such scenarios, one requires requires more sophisticated covariance estimators. 

In a more general setup, covariance estimation from limited samples is a classical problem in statistics and is known to be challenging  both statistically and computationally in whenever the signal dimension is large. There is also a vast literature studying the eigenvalue distribution of the sample covariance matrices and in particular their asymptotic dependence on the required sample size and  the signal dimension (see, e.g., \cite{marchenko1967distribution, hachem2005empirical, couillet2011random} and the refs. therein). Apart from classical works, covariance estimation has reemerged recently in many problems in machine learning, compressed sensing, biology, etc. (there is a vast literature; we refer to \cite{pourahmadi2013high, ravikumar2011high, chen2011robust, friedman2008sparse} from some recent results).  What makes these recent works different from the classical ones is the highly-structured nature of the covariance matrices in these applications. For example, it is well-known that the underlying covariance matrices in many applications are sparse or low-rank or satisfy a structure governed by a graphical model. A key challenge in these new applications is to design efficient algorithms, both statistically and numerically, that can take advantage of the underlying structure to recover the covariance matrix with as less sample size and computational resources as possible.

In this paper, we propose a novel covariance estimation algorithm that exploits the structure of MIMO covariance matrices to estimate the underlying covariance matrix with as few number of signal samples as possible. The first step of our method involves a parametric description of the channel \textit{angular spread function} (ASF) in terms of atoms of a carefully designed \textit{dictionary}. This design is based on super-resolving of the \textit{line-of-sight} (LoS) angles of arrival (AoAs) corresponding to discrete ASF components and approximating the continuous ASF component in terms of a family of limited-support density functions (see Section \ref{sec:ASF_modeling}). Super-resolving the discrete ASF components is done via the MUltiple SIgnal Classification (MUSIC) method \cite{stoica2005spectral}. This method is shown to guarantee a consistent estimate of the discrete ASF components, even when the ASF contains continuous components \cite{najim2016statistical}, under mild conditions on the number of antennas $M$, number of samples $N$, and number of discrete components. In the second step of our method, we estimate the parametric ASF by solving a Maximum-Likelihood (ML) problem. This last part is implemented via solving a \textit{Concave-Convex Procedure} (CCCP) with appropriate initialization (see Section \ref{sec:algs}). The advantage of our proposed method lies in both the particular design of the parametric representation of the MIMO covariance and the optimization of the likelihood function.

\section{System Setup}
We consider a BS equipped with a generic array of $M$ antennas and communicating with a set of users. Without loss of generality, we focus on estimating the UL covariance matrix from UL user pilots. We assume orthogonal user pilots in UL and thereby restrict ourselves to the study of UL covariance estimation for a single generic user. We consider the standard block-fading model for the wireless channel (see, e.g., \cite{tse2005fundamentals} and 3GPP channel model)  and denote by $\hv (s)$ the UL channel vector of the user over resource block $s$. We assume that for covariance estimation the BS exploits the user channel vector belonging to a subset of resource blocks where the resource blocks are separated sufficiently in time or in frequency such that the resulting channel vectors are i.i.d. \cite{haghighatshoar2017massive, haghighatshoar2018low}. We denote the set of these i.i.d. samples by
\begin{equation}\label{eq:ch_expression}
\hv (s) = \int_{\Xim} W (\xiv;s) \bfa(\xim) d \xim,~s\in [N],
\end{equation}
where $N$ is the sample size, where $[N]={0,\ldots,N-1}$, where $\Xim=\{ \xiv \in \bR^3 : \Vert \xiv \Vert=1  \}$ is the set of all valid AoAs belonging to the unit sphere, and $\bfa(\xim) \in \bC^M$ denotes the array response vector at $M$ BS antennas. Furthermore, assuming uncorrelated scattering, $W (\xiv;s)$ denotes a zero-mean, stationary, i.i.. Gaussian process over the set of AoAs. The autocorrelation of this process is given as
\begin{equation}\label{eq:w_autocorr}
\bE[W (\xiv;s)W^\ast (\xiv';s)] =  \gamma(\xim) \delta (\xiv - \xiv'),
\end{equation}
where $\gamma (\xiv)$ is the real, positive measure, denoting the ASF. With this definition, the channel covariance matrix can be expressed as
\begin{align}\label{eq:cov_mat}
\Sigmam_\bfh=\bE[\bfh(s)\bfh(s)^\herm]=\int_{\Xim} \gamma(\xim) \bfa(\xim) \bfa(\xim)^\herm d \xim.
\end{align} 
The array response in above formulas is a function of the AoA, antenna location $\rv_i$ (with $i$ denoting the antenna element index) and wavelength $\lambda$ and is given by
\begin{equation}\label{eq:a_vec}
[\av (\xiv) ]_i = e^{j\tfrac{2\pi}{\lambda} \langle \xiv , \rv_i \rangle},
\end{equation}
where $\langle \cdot , \cdot \rangle$ denotes inner product. The pilot signals are received at the BS as $\yv_s = \hv_s x_s + \zv,~s=0,\ldots,N-1$, where $x_s$ is the pilot symbol, assume to take the value $x_s=1\,\forall\, s$ for simplicity and $\zv \sim \cg (\mathbf{0},N_0 \mathbf{I})$ is the additive white Gaussian noise (AWGN). With this setup, we formulate the covariance estimation problem as follows.
\begin{problem}\label{prob:problem_1}
	Given a set of $N$ noisy channel pilot signals $\{ \yv_s\}_{s\in [N]}$, estimate the channel covariance matrix $\Sigmam_\hv$. 
\end{problem}
Throughout this paper we assume that the noise variance $N_0$ is known to the BS. In the large sample regime ($N\to \infty$), problem \ref{prob:problem_1} can be solved using, e.g., the sample covariance 
\begin{equation}\label{eq:sample_cov}
\widehat{\Sigmam}_\hv = \frac{1}{N} \sum_{s=0}^{N-1} \yv_s \yv_s^\herm - N_0 \mathbf{I},
\end{equation}
which yields a consistent estimator. The more interesting regime, however, arises when we consider a limited number of samples, proportional to the channel dimension, i.e. $N \propto M$. In this case, the ``optimal" covariance estimator is generally unknown. In fact, the sample covariance matrix is the maximizer of the covariance likelihood function given samples $\{ \yv_s \}_{s\in[N]}$. But as is well-known, exploiting additional knowledge about the structure of the covariance yields generally better estimates compared to the sample covariance.  Our goal here is to propose such structure in a massive MIMO scenario where the channel covariance is known to belong to the set of MIMO matrices (see \eqref{eq:cov_mat})
\begin{align}\label{mimo_mat2_set}
\clM:=\Big \{\int_{\Xim} \gamma(\xim) \av (\xim) \av (\xim)^\herm d\xim: \gamma \in \Gammam \Big \},
\end{align}
where $\Gammam$ denotes the class of typical ASFs in wireless propagation, e.g., the class of ASFs with relatively small angular support. A structured and yet generic characterization of the ASF is presented next. 

\section{ASF Characterization: Discrete and Continuous Components}\label{sec:ASF_modeling}
Our proposed method hinges upon decomposing the ASF $\gamma(\xiv)$ into its discrete and continuous components as $\gamma(\xiv)=\gamma_d(\xiv)+\gamma_c(\xiv)$ where $\gamma_d(\xiv)$ models the power received from line of sight (LoS) paths and narrow scatterers and where $\gamma_c(\xiv)$ models the power coming from diffuse, wide scatterers. Mathematically, the two components correspond to spikes (or Dirac deltas) and continuous components in the ASF, respectively, written as
\begin{equation}\label{eq:gamma_decomp}
\gamma (\xiv) = \gamma_d (\xiv) + \gamma_c (\xiv) := \sum_{k=1}^{r} c_k \delta (\xiv - \xiv_k) \, +\, \gamma_c (\xiv), 
\end{equation}
where $c_k>0$ for $k=1,\ldots,r$ and $\delta(\cdot)$ denotes Dirac's delta function. Plugging \eqref{eq:gamma_decomp} into \eqref{eq:cov_mat} we obtain a corresponding  decomposition of the channel covariance matrix as
\begin{equation}\label{eq:channel_cov}
\begin{aligned}
\Sigmah =  \Sigmah^d + \Sigmah^c &:=  \sum_{k=1}^{r} c_k \av (\xiv_k) \av (\xiv_k)^\herm\\
& + \int_{\Xim} \gamma_c (\xiv) \av (\xiv) \av (\xiv)^\herm d \xiv,
\end{aligned}
\end{equation}
where $\Sigmah^d$ is a rank-$r$, positive semi-definite (PSD)  matrix corresponding to the covariance matrix of the discrete components and where  $\Sigmah^c$ is a PSD matrix corresponding to the continuous scattering components. Note that since $\gamma_c$ is a continuous distribution, $\Sigmah^c$ is a full-rank matrix (algebraic rank), although in a massive MIMO system it typically has only few significant singular values (low effective rank) when $\gamma_c$ is sparse (namely, it has a limited support in the angular domain). Here is an outline of the steps taken by our proposed algorithm:
\begin{enumerate}

	\item \textit{Spike Location  Estimation for $\gamma_d$:} We use the MUSIC algorithm \cite{stoica2005spectral} to estimate the AoAs of the spike components, i.e., the angles $\{ \xiv_k \}_{k=1}^r$ in \eqref{eq:gamma_decomp}, from  $N$ noisy samples $\{\yv (s)\}_{s\in [N]}$. We will show that for suitable array geometries and under rather mild conditions on the number of spikes $r$, number of antennas $M$, and the number of samples $N$, this method asymptotically yields a consistent estimate of the spike AoAs. However, for complete estimation of the discrete part $\Sigmah^d$, we need to recover the corresponding weights $\{c_k\}_{k=1}^r$, which we do in the next step.
	
	\item \textit{Sparse Dictionary-based Method for Joint Estimation of  $\gamma_d$ and $\gamma_c$:} We assume that the continuous part $\gamma_c$ has a sparse representation over a suitable dictionary consisting of suitable density functions\footnote{By a density function we mean a real positive function $\psi_k (\xiv)$ supported on $\Xim$ with $\int_{-1}^1 \psi_k (\xiv) d \xiv=1$.}  
	\begin{align}\label{muc_dict_2}
	\clG_c:=\{\psi_i(\xiv): i=1,\ldots,n\}.
	\end{align} 
	We then build an over-complete dictionary by combining the dictionary $\clG_c$ with the Delta measures obtained from support estimation of $\gamma_d$ in Step 1 to compose the dictionary $\clG_{c,d}=\clG \cup \{\delta(\xiv-\xiv_k): k=1,\ldots,r\}$. By adopting  dictionary $\clG_{c,d}$, we propose a parametric, finite-dim representation of the (infinite-dim) set of all valid  ASFs $\gamma$ as 
	\begin{align}\label{eq:par_rep}
	\gamma(\xiv)=\sum_{k=1}^r c_k \delta(\xiv - \xiv_k) + \sum_{i=1}^n b_i \psi_i(\xiv),
	\end{align}
	where $\{c_k \}_{k=1}^r$ and $\{b_i\}_{i=1}^n$ are all real and positive. 
	
	It is worthwhile here to state our motivation for support estimation in Step 1. First, without support estimation, we have to use an infinite-dim dictionary $\{\delta(\xiv-\xiv'): \xiv' \in \Xim\}$ to suitably capture the sparsity of the discrete part $\gamma_d$, which makes the estimation problem quite complicated. We could of course neglect these discrete components and the same dictionary $\Gc_c$ for both discrete and continuous parts, but since $\Gc_c$ typically consists of smooth densities, it can not capture the localized nature of the delta functions in $\gamma_d$. This causes a mismatch in covariance estimation and degrades the performance. This is the reason we add support estimation in Step 1 and consider a mixed dictionary for the ASF.

	Given the noisy samples $\{\bfy(s)\}_{s=0}^{N-1}$, we apply the Maximum Likelihood (ML) procedure to  the parametric representation in \eqref{eq:par_rep} to jointly recover the weights $\{c_k\}_{k=1}^r$ and $\{b_i\}_{i=1}^n$ corresponding to the dictionary elements in $\clG_{c,d}$, thus, to estimate the underlying ASFs $\gamma_d$ and $\gamma_c$ by \eqref{eq:par_rep}. We propose a novel method that involves minimizing the non-convex ML cost function with a  suitable initialization based on Non-Negative Least Squares (NNLS). We use the well-known \textit{Concave-Convex Procedure} (CCCP)  \cite{yuille2003concave} to obtain the stationary points of the non-convex ML objective.
	
	\item \textit{From ASF Estimation to Covariance Estimation:} Finally, having estimated $\gamma_d$ and $\gamma_c$ using the ML method, we estimate the covariance $\Sigmah$ via \eqref{eq:channel_cov}.
	\end{enumerate}
	\subsection{Step 1: Discrete ASF Support Estimation}\label{sec:MUSIC}
	In this section, we use the MUSIC method \cite{schmidt1986multiple} to estimate the number $r$ as well as the support $\{\xiv_k: k =1,\ldots,r\}$ of the spikes in the discrete part of the ASF $\gamma_d(\xiv)$. MUSIC was originally proposed for estimating the number and also the frequency of several sinusoids from their mixture contaminated with noise (see, e.g.,  \cite{stoica1989music} and many references therein). We will use the MUSIC in the following form. Let 
	\begin{align}\label{eq:sample_cov_y}
	\widehat{\Sigmam}_\bfy(M)=\frac{1}{N} \sum_{s=0}^{N-1} \bfy(s) \bfy(s)^\herm
	\end{align}
	be the sample covariance of the noisy samples $\{ \yv_s : s\in [N]\}$, where we also denoted the explicit dependence of $\widehat{\Sigmam}_\bfy$ on the signal dimension by $\widehat{\Sigmam}_\bfy(M)$. Let $\widehat{\Sigmam}_\bfy(M)=\widehat{\bfU} \widehat{\Lambdam} \widehat{\bfU}^\herm$ be the eigen-decomposition of $\widehat{\Sigmam}_\bfy(M)$ 
	where $\widehat{\Lambdam}=\diag(\widehat{\lambda}_{1,M}, \dots, \widehat{\lambda}_{M,M})$ denotes the diagonal matrix consisting of the eigenvalues of $\widehat{\Sigmam}_\bfy(M)$, where we assume that the eigenvalues are ordered as $\widehat{\lambda}_{1,M}\geq \dots \geq  \widehat{\lambda}_{M,M}$. The first step of the MUSIC algorithm adapted to our case is to identify the number of spikes $r$. One approach to do this is to find the  index at which there is a significant jump or separation between consecutive eigenvalues. 
	
	Fig. \ref{fig:eig_vec} illustrates this separation between the eigenvalues of the sample covariance matrix $\widehat{\Sigmam}_\bfy(M)$ for different number of antennas $M=25,~50,~100$ with fixed channel dimension to sample size ratio $\frac{M}{N}=\frac{1}{2}$ for an example ASF 
	\begin{equation}\label{eq:ex_asf}
	\gamma (\xi) = \rect_{[-0.7,-0.4]} +\rect_{[0,0.6]} + (\delta (\xi+0.2)+\delta (\xi-0.4))/2,
	\end{equation}
	where $\rect_{\Ac} = \mathbf{1}_{\Ac}$ is the unit-modulus rectangular function over the interval $\Ac$. This ASF contains $r=2$ spikes and two rectangular continuous components. Also the SNR is set to $20$ dBs. For a large enough number of antennas (and even for a moderate number such as $M=25$) the eigenvalue distribution shows a significant jump, such that the two largest eigenvalues ``escape" from the rest. Note that, by increasing the number of antennas, this separation becomes more and more significant, suggesting a way to estimate the number of spikes $r$ in the ASF. Proposition \ref{prop:sep_asymp} of Section \ref{sec:perf_analysis} and the discussion that follows it, rigorously characterize this behavior. 

	\begin{figure*}[t]
		\centering
		\begin{subfigure}[b]{0.31\textwidth}
			\includegraphics[width=\textwidth]{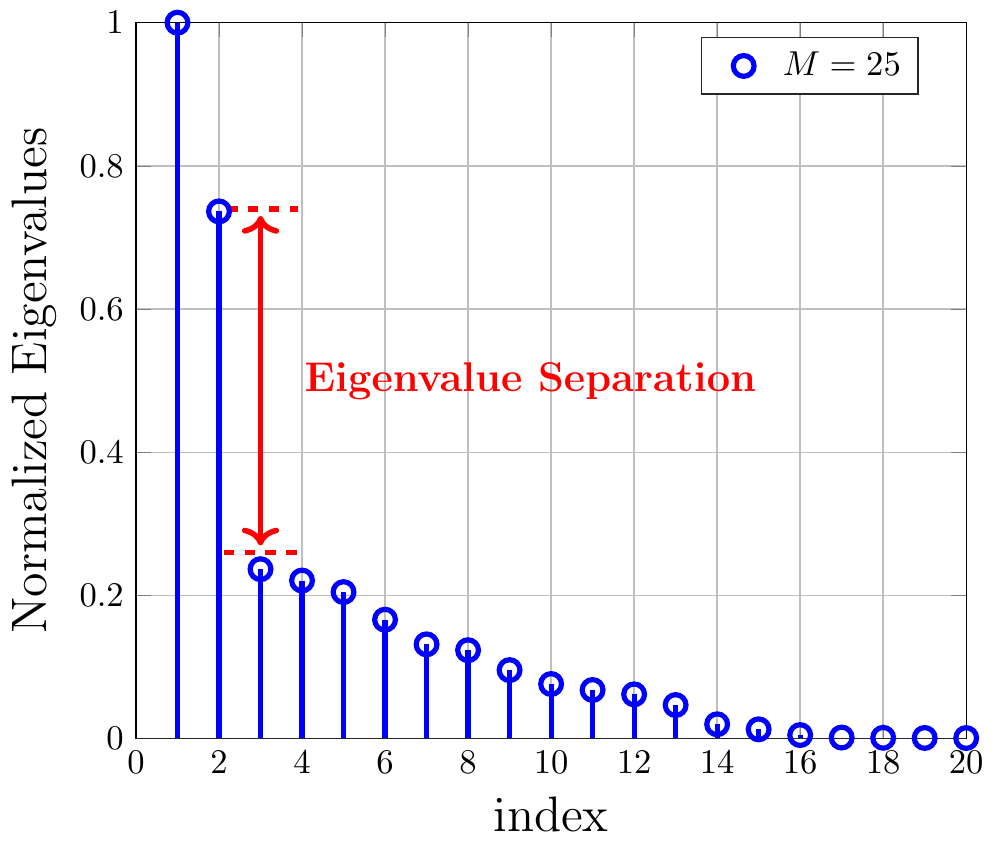}
		\end{subfigure}
		~ 
		\begin{subfigure}[b]{0.31\textwidth}
			\includegraphics[width=\textwidth]{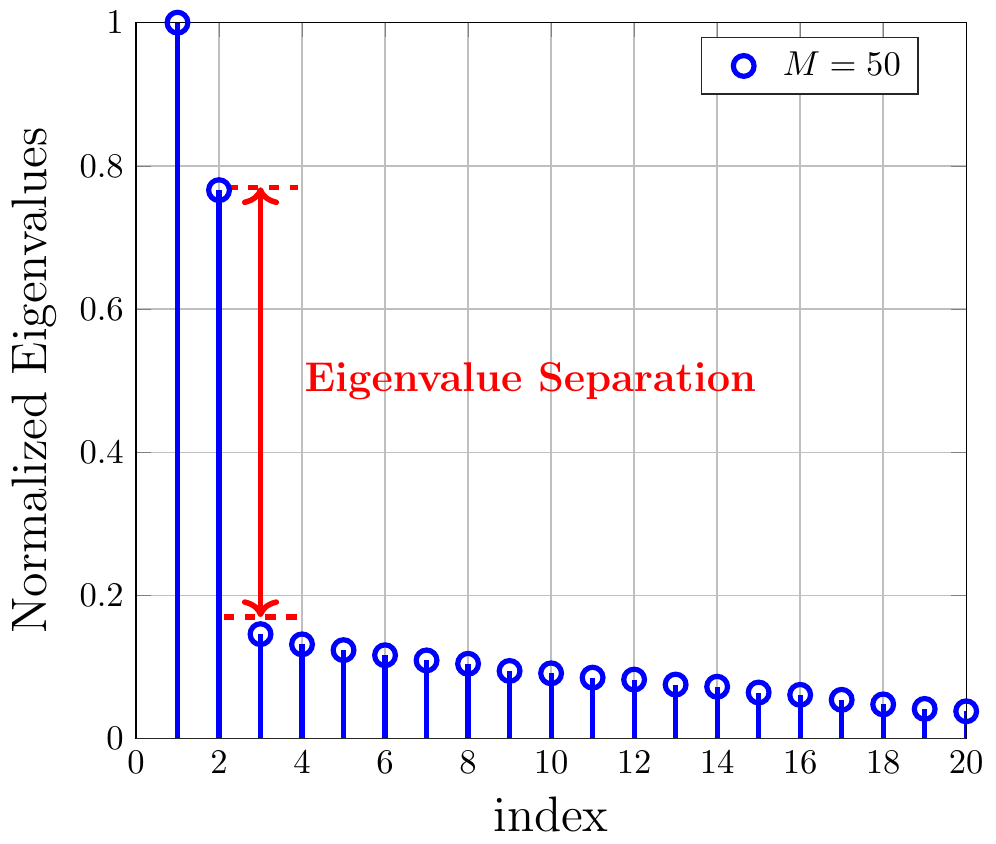}
		\end{subfigure}
		~
		\begin{subfigure}[b]{0.31\textwidth}
		\includegraphics[width=\textwidth]{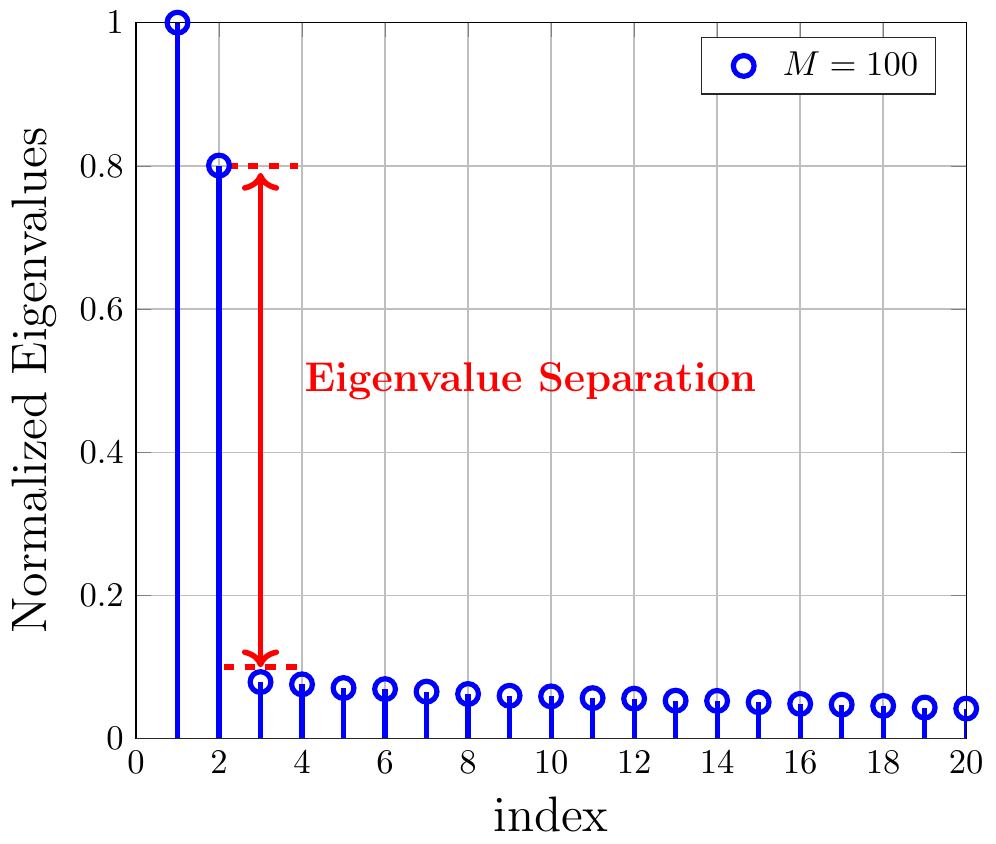}
	\end{subfigure}
\caption{Eigenvalue distribution for the sample covariance matrix $\widehat{\Sigmam}_\bfy(M)$ associated with the example ASF in \eqref{eq:ex_asf} for different values of $M$ and $\frac{M}{N}=\frac{1}{2}$.}\label{fig:eig_vec}
	\end{figure*}

    The separation between the eigenvalues of the sample covariance can be, in principle, characterized in terms of parameters such as the minimum amplitude of the discrete components \cite{najim2016statistical}. Here, since we do not have a priori knowledge about such parameters, we can not calculate the separation size. In other words, we can not calculate a threshold that upperbounds the smaller eigenvalues and separates the from the $r$ largest. Instead, we use a clustering method, and in particular K-means clustering, to separate the group of large eigenvalues from the group of small eigenvalues as follows.
	
	\textbf{Estimating the Number of Spikes $(r)$ via K-means.}
	We first normalize the eigenvalues by the largest eigenvalue $\widehat{\lambda}_{1,M}$ and define the normalized parameters $\beta_i=\big ( \frac{\widehat{\lambda}_{i,M}}{\widehat{\lambda}_{1,M}} \big )^p$ where $p \in (0,1)$ and where $\beta_i\in [0,1]$ for all $i\in \{1,\ldots,M\}$. The role of the exponent $p\in (0,1)$ is to soft-truncate the larger eigenvalues. This can be simply checked from a plot of the function $f:x \mapsto x^p$ in the interval $x \in [0,1]$ where one can see that for $p\in (0,1)$ the function $f$ is quite flat around $x_0=1$ such that, intuitively speaking, all the $x$-values in a large neighborhood of $x_0=1$ are mapped to a very small neighborhood of $f(x_0)=f(1)=1$, thus, soft-truncating of the larger singular values. We use $p=\frac{1}{2}$ for the simulation results illustrated in this paper.
	
	We run K-means clustering algorithm with $K=2$ clusters over the 1-dim set of normalized parameters $\{\beta_i: i =1,\ldots,M\}$ where we initialize the centers of the two clusters with values chosen uniformly at random in the interval $[0,1]$. Denoting by $c^{(\infty)}_{\max}$ and $c^{(\infty)}_{\min}$ the final centers of the clusters after the convergence of the K-means algorithm  and assuming without loss of generality that $c^{(\infty)}_{\min}\leq c^{(\infty)}_{\max}$, we approximate the number of spikes $\widehat{r}$ by the number of  those normalized parameters belonging to the cluster with a larger center $c^{(\infty)}_{\max}$, i.e.,
	\begin{align}
	\widehat{r}=\Big | \big \{i \in \{1,\ldots,M\}: |\beta_i - c^{(\infty)}_{\max}| \leq |\beta_i - c^{(\infty)}_{\min}|  \big \} \Big |. \label{num_spike}
	\end{align}
	We repeat $K$-means clustering several times each time with a different random initialization of the cluster centers. Let us denote the set of all $\widehat{r}$ obtained at different runs by $\{\widehat{r}(\ell): \ell =1,\ldots,L\}$ where $S$ denotes the number of independent runs. We define the empirical CCDF (complementary cumulative density function) of the results by 
	\begin{align}
	F(t)=\frac{\big |  \{\ell: \widehat{r}(\ell) \geq t\} \big |}{L},\  t \in {1,\ldots,M}.
	\end{align}
	Then, we set a threshold $\eta$ very close to $1$ (e.g.,  $\eta=0.95$) and set the final estimate $\widehat{r}$ as 
	\begin{align}
	\widehat{r}= \min\big \{t\in \{1,\ldots,M\}: F(t) \leq 1-\eta\big \}.
	\end{align}
	In this way, we make sure that with an empirical probability at least equal to $\eta$, we have counted all the spikes. Of course, this method may recover \textit{fake} spikes by overestimating the true $r$ (especially when $p\to 0$ such that the larger singular values are too much soft-truncated) but as we will explain later these fake spikes are removed through the proposed algorithm. In other words, it is always better to overestimate the number of spikes than to underestimate them to make sure that one does not miss the true spikes.

	Of course, this method may recover \textit{fake} spikes by overestimating the true $r$ (especially when $p\to 0$ such that the larger eigenvalues are too much soft-truncated) but these fake spikes are removed in the process of the proposed algorithm. In other words, it is always better to overestimate the number of spikes than to underestimate them, to make sure that one does not miss the true spikes. The reason is that, if we estimate more spikes than the true number, just some density elements will be added to our composite dictionary $\Gc_{c,d}$ and it does not harm the eventual covariance estimation.

	Once the number of spikes was estimated through \eqref{num_spike}, MUSIC algorithm proceeds to identify the locations of those spikes. Let $\widehat{\uv}_{\widehat{r}+1,M}, \dots, \widehat{\uv}_{M,M}$ be the eigen-vectors in $\widehat{\bfU}$ corresponding to the smallest $M-\widehat{r}$ eigenvalues and let us define $\bfU_\text{noi}=[\widehat{\uv}_{\widehat{r}+1,M}, \dots, \widehat{\uv}_{M,M}]$ as the $M \times (M-\widehat{r})$ matrix corresponding to the noise subspace. The MUSIC objective function is defined as the 
	\textit{pseudo-spectrum}:
	\begin{equation}\label{eq:pseudo_spectrum_1}
	\widehat{\eta}_M (\xiv) = \|\bfU_\text{noi}^\herm \bfa(\xiv)\|^2=\sum_{k=\widehat{r}+1}^M \left| \av (\xiv)^\herm \widehat{\uv}_{k,M} \right|^2.
	\end{equation}
	MUSIC estimates the support $\{ \widehat{\xiv}_1, \ldots,  \widehat{\xiv}_{\widehat{r}}\}$ of the spikes by identifying $\widehat{r}$ dominant minimizers of $\widehat{\eta}_M (\xiv)$.
		\begin{figure}[t]
		\centering
		\includegraphics[width=0.5\linewidth]{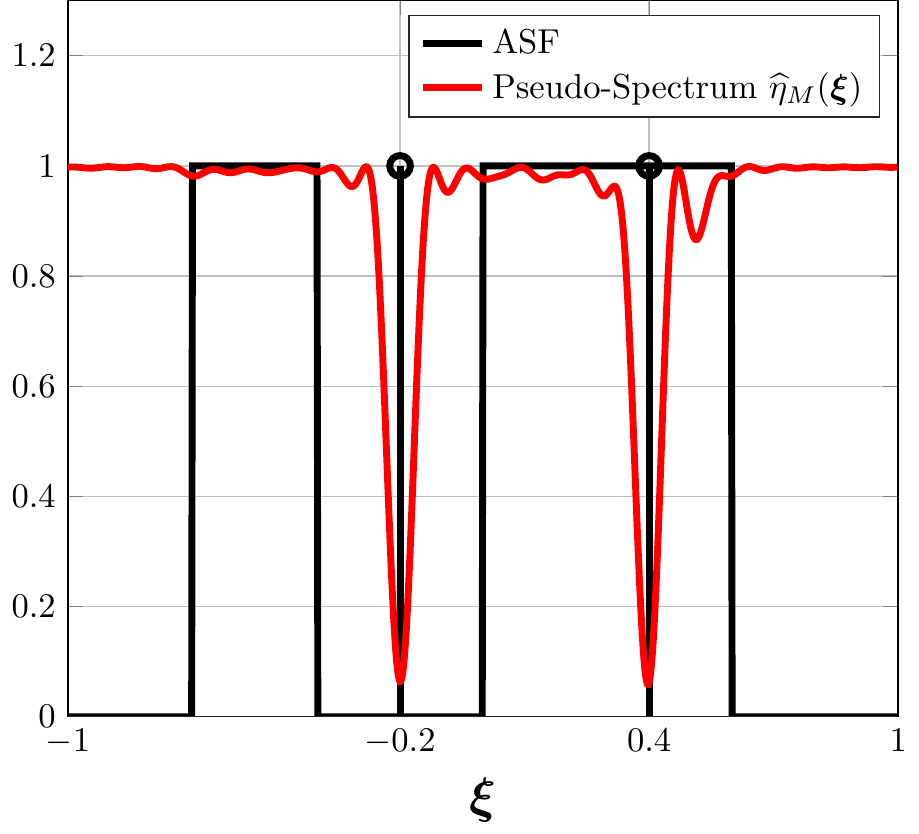}
		\caption{The pseudo-spectrum plotted for the example ASF in \eqref{eq:ex_asf} with $M=25$ and $\frac{M}{N}=1/2$.}
		\label{fig:ASF_MUSIC}
	\end{figure}
	Adapted to our case here, it is well known that when the ASF consists of only a finite number of discrete spikes (no continuous component) and the measurement noise is white Gaussian, the MUSIC estimator implemented as above is consistent \cite{mestre2008improved}. In our case, however, we have also the additional contribution of the continuous (diffuse) part of the ASF.  Mathematically speaking, we can still model this as the recovery of the location of the spikes in noise but the resulting noise will not be white since it will contain the contribution of the diffuse part, which will yield a colored Gaussian vector. Nevertheless, as we will show in Section \ref{sec:perf_analysis}, for large values of $M$ and under some mild condition on the array geometry and also ASF of the continuous part $\gamma_c(\xiv)$, the proposed MUSIC estimator is able to alleviate the colored noise induced by the continuous part. As a result, it still yields a consistent estimate of the location of the spikes in the discrete ASF. Fig \ref{fig:ASF_MUSIC} illustrates the normalized values of the pseudo-spectrum \eqref{eq:pseudo_spectrum_1} for the example ASF in \eqref{eq:ex_asf} and its corresponding sample covariance $\widehat{\Sigmam}_\yv$ for $M=25$ and $\frac{M}{N}=1/2$. As we can see, the $r=2$ smallest minima of the pseudo-spectrum occur very close to the points $\xi=-0.2$ and $\xi=0.4$, which are the locations of the spikes in the true ASF.
	
	A result illustrating the consistency of the MUSIC for such ASFs was proved in \cite{najim2016statistical} for a ULA. This work has been the main inspiration for us to adopt the MUSIC in our setup. We give details on the MUSIC consistency proofs in Section \ref{sec:perf_analysis} to be self-contained. 
	
	Note that we apply MUSIC also to find the spike AoAs in case of a UPA. Unlike the ULA, in this case a consistency proof for the estimator is unclear. However, as we will show in the simulation results section, MUSIC performs quite well in estimating spike locations even in the case of a UPA.

\subsection{Step 2: Sparse Dictionary-Based Method for ML Estimation from Noisy Samples}
Up to this point we have an estimate of the spike locations as $\{ \widehat{\xiv}_k \}_{k=1}^r$ and therefore of the discrete ASF component as 
\begin{align}\label{disc_asf_rep}
\widehat{\gamma}_d (\xiv) = \sum_{k=1}^{\widehat{r}} c_k \delta (\xiv - \widehat{\xiv}_k ),
\end{align}
where the coefficients vector $\cv = [c_1,\ldots,c_{\widehat{r}}]^\transp$ is yet to be estimated. The estimation of the coefficients will be part of the second step in our proposed method, explained below.

In order to estimate the continuous ASF component $\gamma_c (\xiv)$, we consider an approximation of it by the linear combination of the atoms of a specific dictionary of densities $\clG_c = \{ \psi_i (\xiv) : i=1,\ldots,n \}$ as in \eqref{muc_dict_2} as
\begin{equation}
\widehat{\gamma}_c (\xiv) =  \sum_{i=1}^{n} b_i \psi_i (\xiv), \label{cont_asf_rep}
\end{equation}
where $\bv = [b_1,\ldots, b_n]^\transp \in \bR_+^n$
 denotes the non-negative vector of coefficients. This assumption greatly simplifies the estimation task and transforms it from estimating an infinite-dim function $\gamma_c (\xiv)$ to estimating a finite-dim positive vector $\bv$. 
The atoms of the dictionary are selected according to the prior knowledge we have about the propagation environment. Some suitable atoms for the sparse scattering we expect to have wireless channels include localized kernel functions $\psi_i(\xiv)$ , such as  Gaussian, Laplacian, or rectangular kernels,  with a suitably chosen support, or limited-support kernels. Fig. \ref{fig:dic_example} shows an example of rectangular kernel densities for the 1-dim angular domain case (e.g. in the ULA).
	\begin{figure}[t]
	\centering
	\includegraphics[width=0.5\linewidth]{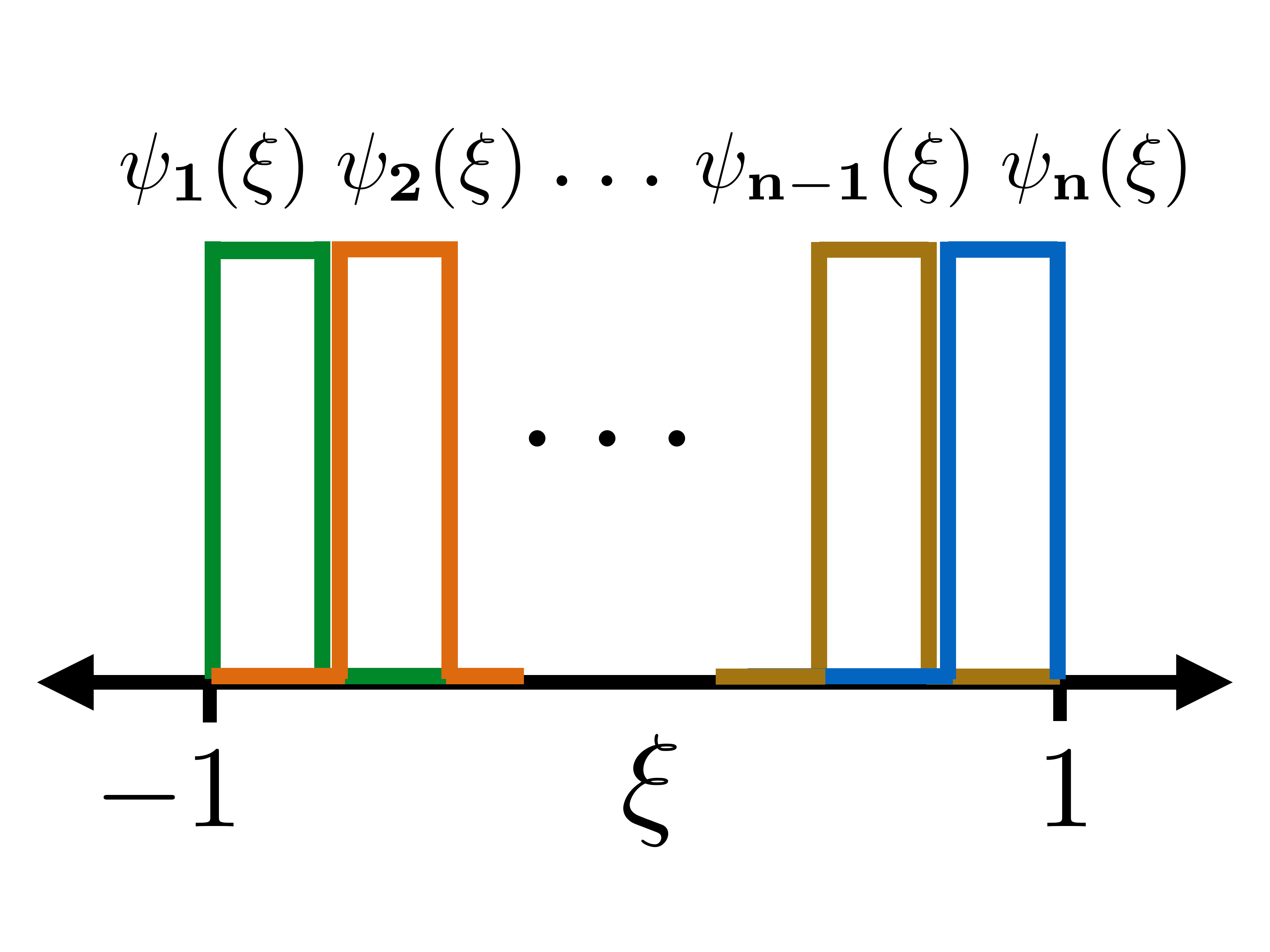}
	\caption{An example of rectangular density functions.}
	\label{fig:dic_example}
\end{figure}
The above decomposition of $\widehat{\gamma}_c (\xiv)$ in \eqref{cont_asf_rep}  and  representation of $\wh{\gamma}_d$ in \eqref{disc_asf_rep} results in the following parametric representation of the channel covariance
\begin{equation}\label{eq:decomp}
\begin{aligned}
\SigmaS &:= \sum_{i=1}^{\wh{r}} c_i \av (\widehat{\xi}_i) \av (\widehat{\xi}_i)^\herm \\
& + \sum_{i=1}^n b_i \int_{\Xim} \psi_i (\xiv) \av (\xiv) \av (\xiv)^\herm d \xiv  
=: \sum_{i=1}^{n+\wh{r}} u_i \Sm_i  \\
\end{aligned}
\end{equation}
where we have defined $\uv = [u_1,\ldots,u_{n+\widehat{r}}]^\transp= $ $[c_1,\ldots,c_{\widehat{r}},b_1,\ldots,b_n]^\transp$, $\Sm_i = \av (\widehat{\xi}_k) \av (\widehat{\xi}_k)^\herm$ for $i=1,\ldots,\wh{r}$ and 
\[\Sm_i =\int_{\Xim} \psi_{i-\wh{r}} (\xiv) \av (\xiv) \av (\xiv)^\herm d \xiv,\] 
for $i=\wh{r}+1,\ldots,\widehat{r}+n$.

\section{Proposed Method: Maximum-Likelihood Estimation}\label{sec:algs}
In order to estimate the coefficients vector $\uv$ from the noisy samples $\{\bfy(s): s\in [N]\}$, we adopt the Maximum Likelihood (ML) estimator. Denoting by $\bfY=[\bfy(0), \dots, \bfy(N-1)]$ the matrix of the observed noisy channel samples, we can write the likelihood function of $\Ym$ given $\SigmaS$ and noise power $N_0$ as
\begin{equation}\label{eq:ML_cals_1}
\begin{aligned}
p \left(\Ym|\SigmaS,N_0 \right) &= \prod_{s=0}^{N-1} p \left(\yv(s)|\SigmaS,N_0 \right)\\
&\hspace{-12mm}= \prod_{s=0}^{N-1} \frac{\exp \left( - \yv(s)^\herm \left( \SigmaS + N_0 \mathbf{I} \right)^{-1} \yv(s) \right)}{\pi^M \det \left( \SigmaS + N_0 \mathbf{I} \right)}\\
&\hspace{-12mm}= \frac{\exp \left( -\tr\left(\left(\SigmaS + N_0 \mathbf{I} \right)^{-1} \Ym \Ym^\herm\right) \right)}{\pi^{MN} \left(\det(\SigmaS + N_0 \mathbf{I})\right)^N }.
\end{aligned} 
\end{equation}
Using \eqref{eq:ML_cals_1} we form the minus log-likelihood function $-\frac{1}{N} \log p \left(\Ym|\uv,N_0 \right)$ and minimize it with respect to the real and non-negative coefficients vector $\uv$. This is formulated as the following optimization problem:


\begin{equation}\label{eq:opt_prog}
\begin{aligned}
&\underset{\uv \in \bR^{n+\wh{r}}_+}{\text{minimize}}~ f(\bfu)&&:=\tr\left( \left(\overset{n+\wh{r}}{ \underset{i=1}{\sum} } u_i \Sm_i  + N_0 \mathbf{I} \right)^{-1}\widehat{\Sigmam}_{\yv}  \right) \\
&~&& + \log \det \left(\overset{n+\wh{r}}{ \underset{i=1}{\sum} }  u_i \Sm_i  + N_0 \mathbf{I} \right),
\end{aligned}
\end{equation}
where $\wh{\Sigmam}_\bfy$ is the sample covariance matrix of the observations in \eqref{eq:sample_cov_y}. Unfortunately, the cost function $f(\bfu)$ in \eqref{eq:opt_prog} is not convex. In fact, it is the sum of a concave and a convex function $f(\uv) = f_{\text{cav}}(\uv)+f_{\text{vex}}(\uv)$ given correspondingly by
\begin{equation}\label{eq:vex_cav}
\begin{aligned}
& f_{\text{cav}}(\uv) =  \log \det \left(\overset{n+\wh{r}}{ \underset{i=1}{\sum} }  u_i \Sm_i  + N_0 \mathbf{I} \right),\\
&f_{\text{vex}}(\uv) =\tr\left( \left(\overset{n+\wh{r}}{ \underset{i=1}{\sum} } u_i \Sm_i  + N_0 \mathbf{I} \right)^{-1}\widehat{\Sigmam}_{\yv}  \right).
\end{aligned}
\end{equation}
Although it is generally difficult to find the global optimum of a non-convex function such as $f(\bfu)$, we use an iterative method that is able to find a \textit{good} stationary point of $f(\bfu)$. Our simulations show that this iterative method is quite efficient and never produces a bad local minimum, but we cannot prove this rigorously. In the following, we explain the mentioned iterative method.

\subsubsection{Optimizing $f(\bfu)$ using Concave-Convex Procedure}
The CCCP \cite{yuille2003concave} is an iterative method for minimizing a  function that is in the form of the sum of a concave and a convex function as in $f(\uv)$, and is guaranteed to converge to a stationary point. The CCCP generates a sequence $\{ \uv^{(t)} : t=1,2,\ldots \}$ as follows: \\
	1) Initialize the problem with a solution $\uv^{(0)}$.\\
	2) At each iteration $t=0,1,\dots$, use the concavity of $\fcav$ to bound it from above by its first order approximation at $\uv^{(t)}$, i.e.,
	\begin{equation}\label{eq:cav_ub}
	\fcav (\uv) \le \fcav (\uv^{(t)})+\langle \nabla\fcav(\uv^{(t)}),\uv - \uv^{(t)}\rangle =: g^{(t)} (\uv),
	\end{equation}
	where $\nabla\fcav$ denotes the gradient of $\fcav$. \\
	3) Use the liner upper bound $g^{(t)} (\uv)$ to $\fcav$ in \eqref{eq:cav_ub} to upper bound $f(\bfu)$ by the proxy function $f^{(t)} (\uv)=\fvex (\uv)+g^{(t)} (\uv)$. Note that $f^{(t)}$ is convex and  tightly approximates $f$  around $\uv = \uv^{(t)}$ tightly. In fact, we have:
	\[ f^{(t)} (\uv^{(t)})=\fvex (\uv^{(t)})+g^{(t)} (\uv^{(t)})= f (\uv^{(t)}).\]
	4) Find the minimizer of the proxy function $f^{(t)}$ and set the next vector in the CCCP iteration to
	\begin{equation}\label{eq:CCP_1}
	\uv^{(t+1)} =\underset{\uv \ge 0}{\arg \min}~ f^{(t)} (\bfu)=\underset{\uv \ge 0}{\arg \min}~ \fvex (\uv)+g^{(t)} (\uv).  
	\end{equation}
Steps (1) to (4) and the Equation \eqref{eq:CCP_1} represent the general iterative CCCP algorithm. Finally, we have the  following theorem, which illustrates a monotonicity property for the CCCP sequence.

\begin{theorem}\label{ccp_thm}
	The sequence $\{ \uv^{(t)}: t=0,1,\dots \}$ produced by CCCP satisfies $f(\uv^{(t+1)})\le f(\uv^{(t)})$. Thus, any limit point of this sequence is a stationary point of the $f(\bfu)$ function. \hfill $\square$
\end{theorem}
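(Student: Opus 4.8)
The plan is to separate the claim into two parts: the monotone–decrease property, which is a short sandwich argument, and the characterization of limit points as stationary points, which requires a compactness–and–continuity argument.

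For the monotonicity, I would first record the two structural facts about the proxy $f^{(t)}(\uv)=\fvex(\uv)+g^{(t)}(\uv)$. Because $\fcav$ is concave, its first–order expansion $g^{(t)}$ at $\uv^{(t)}$ is a global over-estimator, $\fcav(\uv)\le g^{(t)}(\uv)$ for every $\uv\ge 0$, with equality at $\uv=\uv^{(t)}$; adding $\fvex$ to both sides gives $f(\uv)\le f^{(t)}(\uv)$ for all feasible $\uv$, again tight at $\uv^{(t)}$. Then
\[
f(\uv^{(t+1)})\ \le\ f^{(t)}(\uv^{(t+1)})\ \le\ f^{(t)}(\uv^{(t)})\ =\ f(\uv^{(t)}),
\]
where the first inequality is the majorization $f\le f^{(t)}$, the second is optimality of $\uv^{(t+1)}=\arg\min_{\uv\ge 0}f^{(t)}(\uv)$ together with feasibility of $\uv^{(t)}$, and the equality is the tightness at $\uv^{(t)}$. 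This settles the monotonicity part and shows $\{f(\uv^{(t)})\}$ is non-increasing.

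For the limit–point statement I would argue as follows. First, $f$ is coercive on $\bR^{n+\wh r}_+$: the trace term is nonnegative, while $\log\det\big(\sum_i u_i\Sm_i+N_0\mathbf{I}\big)\to+\infty$ as $\|\uv\|\to\infty$, since each $\Sm_i\succeq 0$ is nonzero and so the largest eigenvalue of $\sum_i u_i\Sm_i$ diverges. Hence the sublevel set $\clS_0=\{\uv\ge 0:f(\uv)\le f(\uv^{(0)})\}$ is compact and, by the monotonicity just proved, the whole sequence stays in $\clS_0$; thus $\{f(\uv^{(t)})\}$ decreases to a finite limit $f^\infty$, and any limit point $\uv^\star$ satisfies $f(\uv^\star)=f^\infty$ by continuity. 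Next I would write the first–order optimality (variational) condition for the convex inner problem \eqref{eq:CCP_1}: since $\nabla f^{(t)}(\uv)=\nabla\fvex(\uv)+\nabla\fcav(\uv^{(t)})$, the minimizer obeys $\langle \nabla\fvex(\uv^{(t+1)})+\nabla\fcav(\uv^{(t)}),\,\uv-\uv^{(t+1)}\rangle\ge 0$ for all $\uv\ge 0$. Take a subsequence $\uv^{(t_k)}\to\uv^\star$, and (using compactness of $\clS_0$) pass to a further subsequence so that also $\uv^{(t_k+1)}\to\uv^{\star\star}$. Let $f^{(\star)}(\uv):=\fvex(\uv)+\fcav(\uv^\star)+\langle\nabla\fcav(\uv^\star),\uv-\uv^\star\rangle$ be the majorant built at $\uv^\star$; since $\fcav$ and $\nabla\fcav$ are continuous, $f^{(t_k)}\to f^{(\star)}$ uniformly on $\clS_0$, and passing to the limit in $f^{(t_k)}(\uv^{(t_k+1)})\le f^{(t_k)}(\uv)$ yields $f^{(\star)}(\uv^{\star\star})\le f^{(\star)}(\uv)$ for all $\uv\ge 0$, i.e. $\uv^{\star\star}$ minimizes $f^{(\star)}$ over $\bR^{n+\wh r}_+$. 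Combining the majorization, tightness at $\uv^\star$, and $f(\uv^{\star\star})=f(\uv^\star)=f^\infty$ gives $f^{(\star)}(\uv^{\star\star})\ge f(\uv^{\star\star})=f(\uv^\star)=f^{(\star)}(\uv^\star)\ge f^{(\star)}(\uv^{\star\star})$, so all terms are equal; when $f^{(\star)}$ is strictly convex (e.g. when $\wh{\Sigmam}_\yv\succ 0$ and the $\{\Sm_i\}$ are linearly independent, so $\fvex$ is strictly convex) this forces $\uv^{\star\star}=\uv^\star$. Hence $\uv^\star$ minimizes its own majorant, and since $\nabla f^{(\star)}(\uv^\star)=\nabla\fvex(\uv^\star)+\nabla\fcav(\uv^\star)=\nabla f(\uv^\star)$, we obtain $\langle\nabla f(\uv^\star),\uv-\uv^\star\rangle\ge 0$ for all $\uv\ge 0$, i.e. $\uv^\star$ is a stationary point of $f$ on $\bR^{n+\wh r}_+$.

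The main obstacle is the analytic bookkeeping in the second half rather than any single clever step: one must keep the iterates in a compact set (handled by coercivity of $f$, which uses $N_0>0$ and that the $\Sm_i$ are nonzero PSD matrices) and must legitimately pass the first–order condition to the limit, which is exactly where the convergence $\uv^{(t_k+1)}\to\uv^\star$ is needed. If one prefers to avoid the strict-convexity hypothesis, the alternative is to establish a sufficient-decrease estimate $f(\uv^{(t)})-f(\uv^{(t+1)})\ge\tfrac{m}{2}\|\uv^{(t+1)}-\uv^{(t)}\|^2$ with $m>0$ a uniform strong-convexity modulus of the $f^{(t)}$ on $\clS_0$; summability of the left-hand side forces $\|\uv^{(t+1)}-\uv^{(t)}\|\to 0$, so $\uv^{(t_k+1)}\to\uv^\star$ directly and the limit of the variational inequality gives the same conclusion.
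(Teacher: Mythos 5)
Your argument is correct, but it is a genuinely different (and more substantial) route than the paper's: the paper disposes of the theorem in one line by citing Theorem~3 of Yuille--Rangarajan, whereas you give a self-contained proof. Your first half is exactly the standard majorize--minimize sandwich $f(\uv^{(t+1)})\le f^{(t)}(\uv^{(t+1)})\le f^{(t)}(\uv^{(t)})=f(\uv^{(t)})$, which is all the cited reference actually establishes; the second half (stationarity of limit points) is not contained in that reference and is where your contribution lies. Your coercivity argument for $f$ on $\bR^{n+\wh r}_+$ is sound (it uses $N_0>0$ and $\Sm_i\succeq 0$, $\Sm_i\neq 0$), and the subsequence/limit-passage argument for the variational inequality is correct. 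One simplification you should make: the strict-convexity hypothesis is superfluous. Your own equality chain already gives $f^{(\star)}(\uv^\star)=f^{(\star)}(\uv^{\star\star})=\min_{\uv\ge 0}f^{(\star)}(\uv)$, so $\uv^\star$ minimizes its own majorant whether or not $\uv^{\star\star}=\uv^\star$; the identification of the two limits is never needed, and dropping it also removes the unverified assumptions $\wh{\Sigmam}_\yv\succ 0$ and linear independence of the $\{\Sm_i\}$ (the latter would similarly be needed for the ``sufficient decrease'' alternative you sketch, so the direct route is preferable). A final cosmetic point: it would be worth stating explicitly that the inner minimizer $\uv^{(t+1)}$ exists, which follows because $g^{(t)}$ has strictly positive gradient components $\tr\bigl(\Sm_j(\sum_i u_i^{(t)}\Sm_i+N_0\mathbf{I})^{-1}\bigr)>0$ and $\fvex\ge 0$, so $f^{(t)}$ is coercive on the orthant.
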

\begin{proof}
	Proof follows from Theorem 3 in \cite{yuille2003concave}. 
	\end{proof}
Applying the CCCP to our ML cost function $f(\bfu)$ we reach at the following update rule:
\begin{equation}\label{eq:CCP_2}
\begin{aligned}
\uv^{(t+1)}=& \underset{\uv \geq 0}{\arg\min}~  \tr\left( \left(\overset{n+\wh{r}}{ \underset{i=1}{\sum} } u_i \Sm_i  + N_0 \mathbf{I} \right)^{-1}\widehat{\Sigmam}_{\yv}  \right)+\\
& \tr\left(\overset{n+\wh{r}}{ \underset{j=1}{\sum} } u_j \Sm_j \left(\overset{n+r}{ \underset{i=1}{\sum} } u_i^{(t)} \Sm_i  + N_0 \mathbf{I} \right)^{-1} \right),\\
\end{aligned}
\end{equation}
for a given initial point $\uv^{(0)}$ and for $t=0,1,\ldots$. The optimization in each iteration can be solved by standard convex programming toolboxes, and the algorithm halts when an appropriate convergence property is met. We can also show that the optimization in each iteration is equivalent to a semidefinite program and that it can be solved using, for example, the projected gradient method with desirable convergence properties. These discussions are left for a future work due to space limitations.

\subsection{Initializing the CCCP}
In order to implement the ML method via the CCCP we need a suitable initialization of the optimization variable $\uv$. First, we know that if the number of samples $N$ is large enough, the sample covariance matrix $\widehat{\Sigmam}_\bfy$ would converge to $\Sigmam_\bfy=\Sigmam_\hv + N_0 \bfI$. Moreover, the ML cost function, intuitively speaking, is a metric to find a good fitting to the sample covariance matrix $\widehat{\Sigmam}_\bfy$ from the set of all covariance matrices of the form 
\begin{align}\label{par_cov_last}
\Sigmam_\bfy=\Sigmam_\hv + N_0 \bfI=\sum_{j=1}^{n+\wh{r}} u_j \bfS_j + N_0 \bfI.
\end{align}
Therefore, to initialize the CCCP, we can use a simpler metric to perform this fitting between $\widehat{\Sigmam}_\bfy$ and parametric covariance in \eqref{par_cov_last}. For this purpose, we use the Frobenius norm as a fitting metric and find the initial point for CCCP as 
\begin{align}
\bfu^{(0)}=\argmin_{\bfu \geq 0} \| \widehat{\Sigmam}_\bfy - \sum_{j=1}^{n+\wh{r}} u_j \bfS_j - N_0 \bfI\|_\sfF^2.
\end{align}
Applying vectorization and defining $\bfA=[\vec(\bfS_1) \dots \vec(\bfS_{n+\wh{r}})]$ and $\bff=\vec(\widehat{\Sigmam}_\bfy - N_0 \bfI)$, we can write this as a Non-Negative Least Square (NNLS) problem
\begin{align}\label{nnls_init_ccp}
\bfu^{(0)}=\argmin_{\bfu \geq 0} \|\bfA \bfu - \bff\|^2,
\end{align}
which can be easily solved using standard NNLS solvers. Denoting the solution of the iterative procedure in \eqref{eq:CCP_2} as $\uv^\star$, the channel covariance estimate is given as
\begin{equation}\label{eq:final_estimate}
\Sigmam_\hv^\star = \sum_{i=1}^{n+\wh{r}} u_i^\star \Sm_i,
\end{equation}
which concludes our covariance estimation method.
\section{Simulation Results}
In this section, we perform numerical simulations to compare the performance of our proposed algorithm with the following two rival methods.

 \subsection{The SPICE Method} The first method is known as \textit{sparse iterative covariance-based estimation} (SPICE) \cite{stoica2011spice}. Assuming a dictionary $\Dm =[\av (\xiv_1),\ldots,\av (\xiv_Q)]$ of $Q$ array response vectors corresponding to $Q$ angles of arrival, this method parameterizes the channel covariance matrix as $\Sigmam_\hv = \Dm \text{diag}(\uv) \Dm^\herm$, where the vector $\uv \in \bR_+^{Q}$ represents the channel variance along the set of AoAs $\{ \xiv_1,\ldots,\xiv_Q \}$. Then the parameters vector $\uv$ is estimated by solving the following convex program:
\begin{equation}\label{eq:SPICE}
\begin{aligned}
 \uv^\star = &\underset{\uv}{\arg \min} \Vert \Sigmam^{-1/2} (\widehat{\Sigmam}_\yv - \Sigmam)\widehat{\Sigmam}_\yv^{-1/2} \Vert^2 ~ \\ &\text{subject to} ~  \Sigmam = \Dm \text{diag}(\uv) \Dm^\herm.
\end{aligned}
\end{equation}
The channel covariance estimate is then given as $\Sigmam_\hv^{\text{SPICE}} = \Dm \text{diag}(\uv^\star) \Dm^\herm$. Note that the special parametric modeling of the dictionary $\Dm$ in this method is equivalent to our modeling in \eqref{eq:decomp} only if we assume the family of density functions $\Gc_c$ in \eqref{muc_dict_2} to be consisting of only delta functions, i.e. $\psi_i = \delta (\xiv - \xiv_i)$ for $i=1,\ldots,n$. In this sense our parametric model is more general and allows for a wider choice of density functions to approximate the ASF. In other words, for us the covariance matrix needs not be a linear combination of rank-1 matrices $\av (\xiv_i) \av (\xiv_i)^\herm,~i=1,\ldots, Q$, as is the case for SPICE.  

To have  a better theoretical understanding of SPICE, it is worthwhile here to mention that there is an interesting  relation between the SPICE and our proposed ML method through the Bregman divergences \cite{bregman1967relaxation}, which we briefly explain in the following. Let us consider the strictly convex function $f(\Psim)=-\log | \Psim |$ over the space of $M \times M$ PSD matrices. For two PSD matrix $\Psim_1, \Psim_2$, the Bregman divergence between  $\Psim_1$ and $\Psim_2$ generated by $f(\Psim)$ is defined by 
\begin{align}
\scrD(\Psim_1,& \Psim_2)=f(\Psim_1) - f(\Psim_2) - \inp{\nabla f(\Psim_2)}{\Psim_1-\Psim_2}\nonumber\\
&= \log|\Psim_2| - \log|\Psim_1| + \trace \left [ \Psim_2^{-1} (\Psim_1-\Psim_2) \right ] \nonumber\\
&=\log|\Psim_2| - \log|\Psim_1|  + \trace( \Psim_2^{-1} \Psim_1) - M.\label{berg_eq}
\end{align}
It is well-known that Bregman divergence $\scrD(\Psim_1, \Psim_2)$ is a measure of distance between  $\Psim_1$ and $\Psim_2$, although it is not a distance since it is not symmetric with respect to its arguments. Also, due to the strict convexity of $f(\Psim)=-\log |\Psim|$, we have  $\scrD(\Psim_1, \Psim_2) \geq 0$ with equality if and only if $\Psim_1=\Psim_2$. Moreover,  $\scrD(\Psim_1, \Psim_2)$ is convex with respect to its first argument $\Psim_1$ but not necessarily convex with respect to the second one $\Psim_2$. 
With this brief introduction, we can now illustrate the connection between our ML method and SPICE. One can see that, after dropping the constant terms, our proposed ML method is equivalent to finding a matrix $\Sigmam$ that minimizes the Bregman divergence  $\scrD( \widehat{\Sigmam}_\bfy, \Sigmam)$ of $\Sigmam$ with the sample covariance matrix $\widehat{\Sigmam}_\bfy$. By a little simplification, one can also show that the SPICE cost function in \eqref{eq:SPICE} can be written as $\trace(\Sigmam \widehat{\Sigmam}_{\bfy}^{-1}) + \trace(\Sigmam^{-1} \widehat{\Sigmam}_{\bfy})$. Therefore, from \eqref{berg_eq}, one can check that SPICE is equivalent to finding a covariance matrix $\Sigmam$ that minimizes the symmetric Bregman divergence $\scrD(\Sigmam, \widehat{\Sigmam}_\bfy)+\scrD(\widehat{\Sigmam}_\bfy, \Sigmam)$ with respect to  the sample covariance matrix $\widehat{\Sigmam}_\bfy$. 

Interestingly, compared with ML method that minimizes the non-convex Bregman divergence $\scrD( \widehat{\Sigmam}_\bfy, \Sigmam)$ (recall than $\scrD$ is generally non-convex with respect to its second argument), SPICE yields a convex optimization problem. 
It is generally known that, intuitively speaking, non-convex cost functions, provided that they can be properly minimized,  yield better performances  than the convex ones. In our case, we can see this evidently from the simulation results where our proposed non-convex ML method yields a better performance than  SPICE.  However, SPICE has the computational advantage that, due the convexity, it can be globally minimized. Although  generally we cannot guarantee the global optimality of the solution of our proposed CCCP method, our results indicate that the resulting solution is quite good in all range of parameters and simulation settings, and has always better performance than the estimate produced by SPICE. 

\subsection{ $\ell_{2,1}$-norm regularized Least Squares}
The second method we use for comparison is  $\ell_{2,1}$-norm regularization method proposed in \cite{haghighatshoar2018low}. Again assuming the dictionary $\Dm$ defined above, this method solves the following convex problem:
\begin{equation}\label{eq:mmv_formula}
\Wm^\star = \underset{\Wm \in \bC^{Q\times N}}{\arg \min}~ \frac{1}{2} \Vert\frac{1}{\sqrt{M}} \Dm\Wm -\Ym \Vert^2 + \sqrt{N} \Vert\Wm \Vert_{2,1}, 
\end{equation}
 where $\Vert\Wm \Vert_{2,1} = \sum_{i=1}^Q \Vert \Wm_{i,\cdot} \Vert$ denotes the $\ell_{2,1}$ norm. Then the covariance is estimated as $\Sigmam_{\hv}^{\ell_{2,1}} = \Dm \text{diag}(\uv^\star) \Dm^\herm$, where now the vector $\uv^\star$ consists of elements $u_i^\star =  \frac{\Vert \Wm_{i,\cdot} \Vert}{M \sqrt{N}}$.
 
For the comparisons we use the same dictionary matrix for all methods, except for the case in which we choose $\Gc_c$ to be a rectangular (non-delta) density family, since the rival methods are incompatible with the corresponding dictionary.

Denoting a generic covariance estimate as $\widetilde{\Sigmam}_\hv$, we use two error metrics to evaluate the estimation quality:\\
\begin{enumerate}
\item \textit{Normalized Frobenius-norm Error}: This error is defined as 
	\[ E_{\text{NF}} = \bE \left\{ \frac{\Vert \Sigmah -  \widetilde{\Sigmam}_\hv \Vert_\sfF}{\Vert \Sigmah \Vert_\sfF}\right\},  \]
	where the expectation is taken over random ASF realizations and random channel vector realizations given a specific ASF. 
	
	\item \textit{Grassmanian-distance Error}: This metric denotes the principal subspaces $\Um_J$ and $\tilde{\Um}_J$ of $\Sigmah$ and $\widetilde{\Sigmam}_\hv$ corresponding to their $J$ largest eigenvalues, where $J$ is the smallest integer satisfying $\frac{\sum_{i=1}^J \alpha_i}{\sum_{i=1}^M \alpha_i}>0.95$. Then the Grassmanian-distance Error between $\Sigmam_\hv$ and $\widetilde{\Sigmam}_\hv$ is defined as 
	\[ E_{\text{GD}} = \bE \left\{  \Vert \tauv \Vert_2\right\},  \]
	where $\tauv = [\tau_1,\ldots,\tau_J]^\transp$ is a $J$-dim vector such that $\cos (\tau_j),~j=1,\ldots,J$ are the eigenvalues of  $\bfU_J^\herm \widetilde{\bfU}_J$ \cite{miretti2018fdd}.	

	This metric shows how far the dominant subspaces of the true and estimated covariance matrices are from each other, which is an important factor in various applications of massive MIMO such as user grouping and group-based beamforming.
\end{enumerate}

\subsection{Setup for ULA and UPA}
In the simulations of this section, we consider a ULA with $M=20$ antennas and a UPA with $M = 5\times5=25$ antennas, where the spacing between two consecutive antenna elements is set to $d = \frac{\lambda}{2}$. Therefore, using \eqref{eq:a_vec}, the $i$-th element of the array response is given by $[\av (\xiv)]_i = e^{j \pi \langle \xiv , \rv_i \rangle}$ for $i=0,\ldots,M-1$. 
We produce random ASFs in the following general format:
\begin{equation}\label{eq:random_ASF}
\begin{aligned}
\gamma (\xiv) = \gamma_d (\xiv) + \gamma_c (\xiv) & = (\delta (\xiv - \xiv_1) + \delta (\xiv - \xiv_2))/4 \\
& + (\rect_{\Ac_1}+\rect_{\Ac_2})/2Z,
\end{aligned}
\end{equation}
where $\xiv_i, \forall i \in \{1, 2\}$ is chosen uniformly at random over $[-1,1]$ for the ULA and $||\xiv_i||_2^2 \leq 1$ for the UPA. Moreover, in the ULA $\Ac_1\subset [-1,0]$ and $\Ac_2\subset [0,1]$ and in UPA $\Ac_1$ and $\Ac_2$ are in different random quadrants of the unit circle. The width of $\Ac_1$  and $\Ac_2$ are chosen uniformly at random over $[0.1, 0.3]$ for the ULA and have an area uniformly chosen at random over $[0.3, 0.5]$ for the UPA. Furthermore, the normalization scalar $Z = \int_{\Omega} \rect_{\Ac_1}+\rect_{\Ac_2}\, d\xiv$, where in ULA $\Omega = [-1,1]$ and in UPA $\Omega$ is the unit circle. For every random ASF and for a fixed number of samples $N$, we generate 50 random, noisy channel sample matrices $\Ym = [\yv_0,\ldots,\yv_{N-1}]$ and estimate the channel covariance $\Sigmam_{\hv}$ using various methods. Then we average the estimation error over both the random sample realizations and 100 random ASFs. The SNR is set to $20$ dBs. The number of atoms in the dictionary of continuous densities $\Gc_c$ is $n=2M$ for the ULA and $n=9M$ for the UPA. The dictionary $\Gc_c$ used for parameterizing the continuous ASF component is chosen to be consisting of rectangular densities with non-overlapping support (see Fig. \ref{fig:dic_example}), i.e. $\psi_i (\xiv) = \rect_{\Ac_i}$ with $\Ac_i = [-1+\tfrac{2(i-1)}{n},-1+\tfrac{2i}{n}],~i=1,\ldots,n$ for the ULA and $\psi_{i,j} (\xiv) = \rect_{\Ac_i\times \Ac_j}$ with $\Ac_i = [-1+\tfrac{2(i-1)}{n},-1+\tfrac{2i}{n}], \Ac_j = [-1+\tfrac{2(j-1)}{n},-1+\tfrac{2j}{n}],~i,j=1,\ldots,\sqrt{n}$ for the UPA.  
\begin{figure*}[t]
	\centering
	\begin{subfigure}[b]{0.45\textwidth}
		\includegraphics[width=\textwidth]{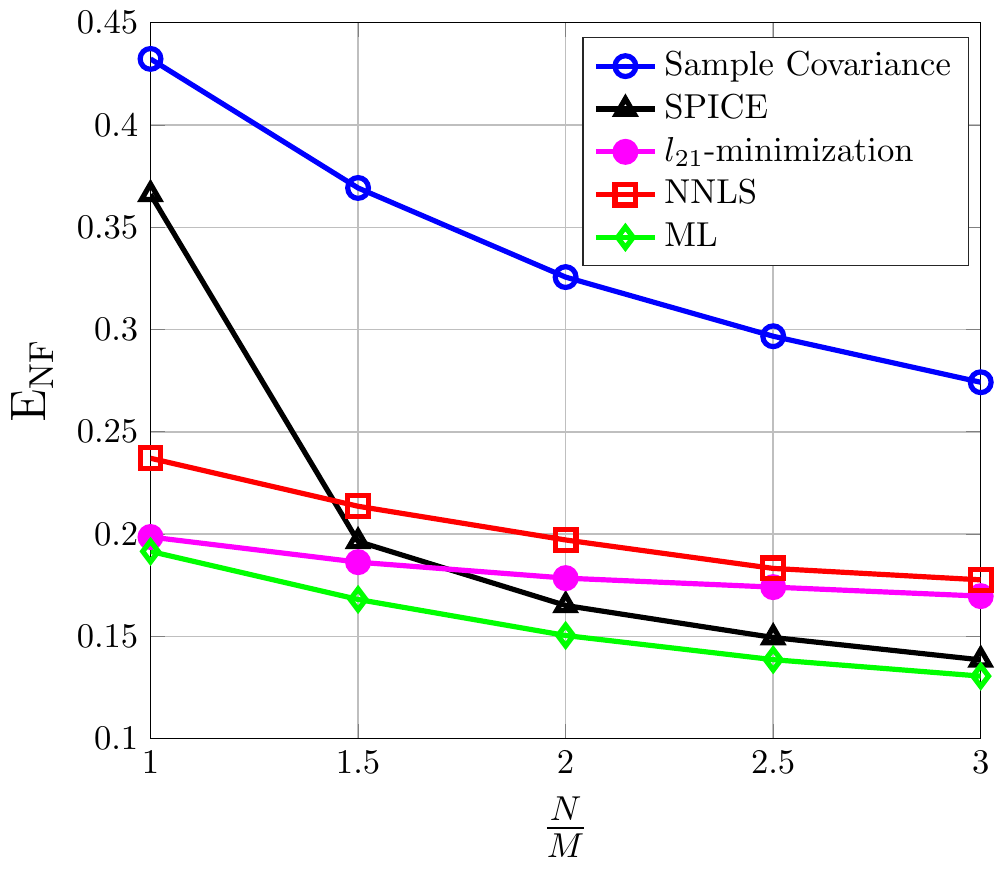}
		\caption{}
	\end{subfigure}
	~ 
	\begin{subfigure}[b]{0.45\textwidth}
		\includegraphics[width=\textwidth]{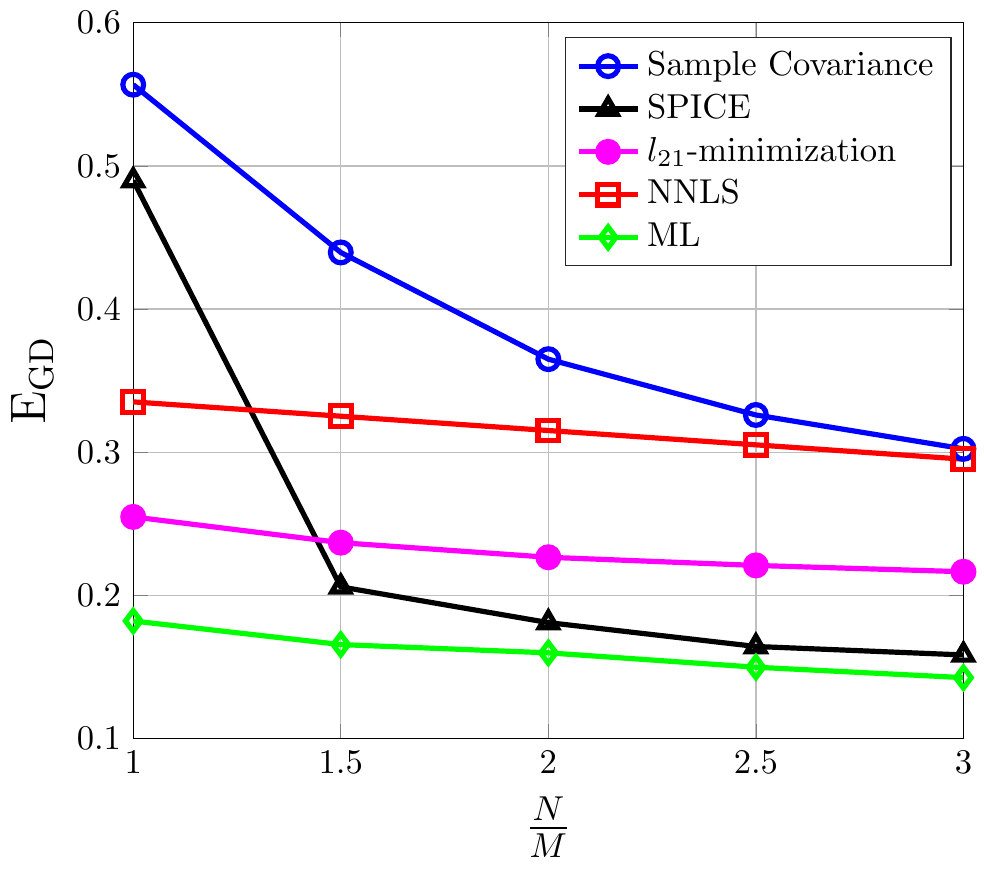}
		\caption{}
	\end{subfigure}
	
	\begin{subfigure}[b]{0.45\textwidth}
		\includegraphics[width=\textwidth]{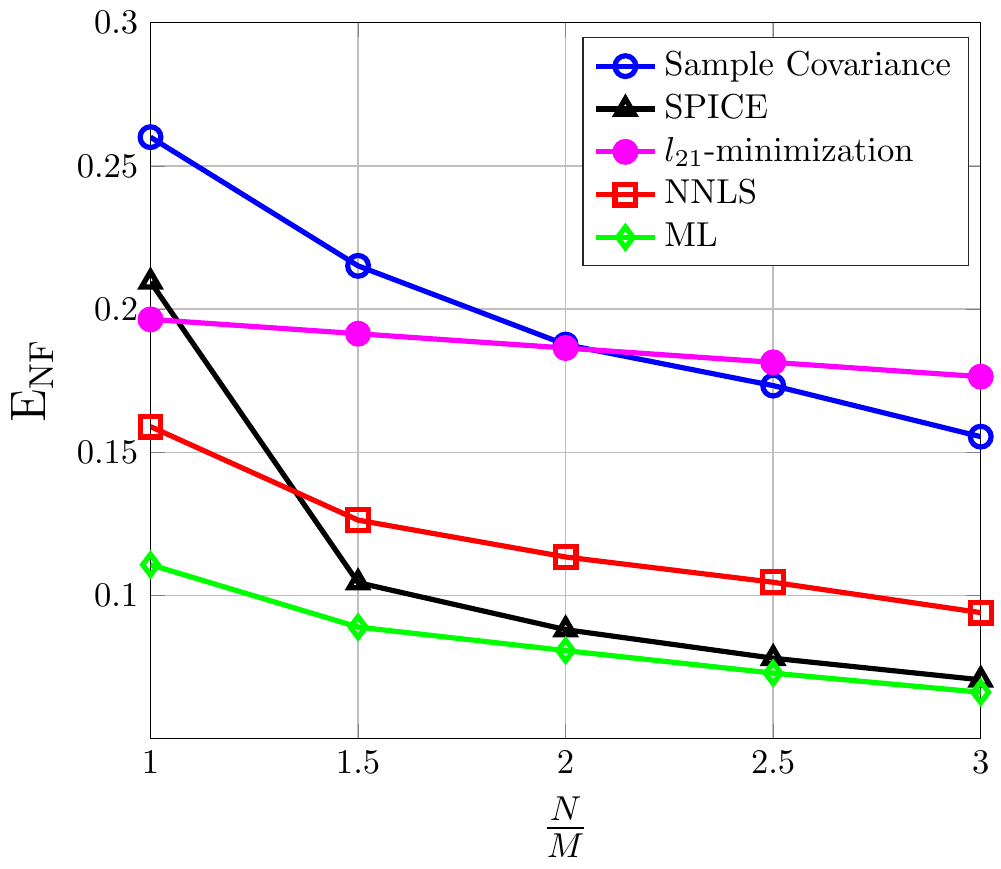}
		\caption{}
	\end{subfigure}
	~ 
	\begin{subfigure}[b]{0.45\textwidth}
		\includegraphics[width=\textwidth]{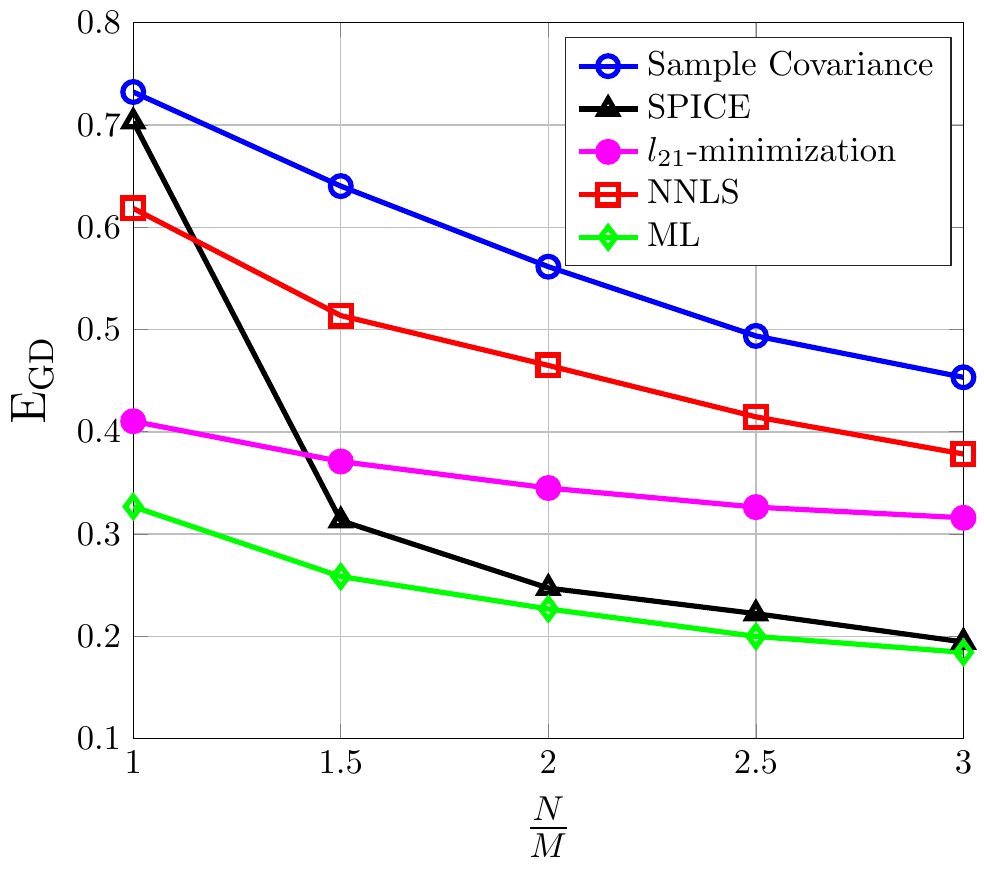}
		\caption{}
	\end{subfigure}
	\caption{Estimation quality comparison for a ULA with $M=20$ (upper row) and a UPA with $M=25$ (lower row): (a) and (c) normalized Frobenius-norm error, (b) and (d) Grassmanian-distance error. }\label{fig:UPA_Err}
\end{figure*} 
Note that this simulation setup is indeed in favor of the rival methods, in that none of them in their original form involves the design of an appropriate dictionary as we did using spike location estimation via MUSIC and introducing the dictionary of continuous kernels $\Gc_c$. Although this structured parameterization of the ASF comprised a major part of our proposed method, we decided to use it also for the rival methods to make the comparisons completely fair.

\subsection{Comparison for ULA}
In the upper row of Fig. \ref{fig:UPA_Err}, the comparison for ULA with normalized Frobenius-norm error and Grassmanian distance error are depicted. It is firstly observed that the results of the proposed ML method outperform the others for both metrics in all range of sample size, which indicates the advantage of our method. This is evident especially in small number of samples. The reason is that in such cases, the ML objective finds a much better fit to the available number of samples compared to the other methods. It is also observed that, the results of ML are better than the results of NNLS since we take the results of NNLS as the initial points of the proposed CCCP for ML algorithm. For example, although the results of SPICE are close to the proposed ML method under large sample size $N$, SPICE performs significantly worse than ML under small sample size $N$, e.g., $N/M=1$ for both metrics. As mentioned several times before, this is an important point, noting the particularly restrictive $N/M$ ratio in massive MIMO.
	
\subsection{Comparison for UPA} 
In the lower row of Fig. \ref{fig:UPA_Err}, the comparison for UPA with normalized Frobenius-norm error and Grassmanian distance error are depicted. Similar performances to the ULA case are observed for all methods. It is shown that under small number of sample size $N$ the proposed ML method performs much better than the other methods for both metrics. Again, by increasing the sample size $N$, the performance gap between all methods gets smaller as we enter the large-number-of-samples scenario.

\section{Conclusion}
We presented a massive MIMO channel covariance estimator. Using the specific structure of MIMO covariances, this estimator models a parametric representation of the ASF over the spatial domain and obtains the corresponding parameters by optimizing a maximum-likelihood objective. Our results show that the proposed method is superior to several state-of-the-art algorithms in the literature in terms of different performance metrics.
	
	\section{Appendix}
	\subsection{Theoretical Analysis of MUSIC and Further Discussion}\label{sec:perf_analysis}
	In this section, we provide a semi-rigorous analysis of the performance of the MUSIC algorithm for support estimation  for  a  ULA. Although the analysis does not extend verbatim to a 2D UPA, it gives insights on why MUSIC performs quite well for support estimation.
	The propositions and theorems are provided without proof. For the proofs, please see \cite{najim2016statistical}, which is the main source of the analysis provided in this section.

	We consider a ULA with $M$ elements and  with standard antenna spacing $\frac{\lambda}{2}$. We assume an ASF $\gamma(\xi)=\gamma_d(\xi) + \gamma_c(\xi)$ where $\gamma_d(\xi)$ is the ASF of the discrete part consisting of $r$ Dirac's delta functions as $\gamma_d(\xi)=\sum_{k=1}^r c_k \delta(\xi-\xi_k)$ and where $\gamma_c(\xi)$ is the ASF of the continuous part.
	In the scaling law studied in \cite{najim2016statistical}, the authors consider a scenario where the coefficient of spikes $\{c_k^{(M)}: k=1,\ldots,r\}$ scale with $M$ according to $c_k^{(M)}=\frac{\upsilon_k}{M}$ where $\{\upsilon_k:k=1,\ldots,r\}$ are positive constants encoding the relative strength of the spikes. As a result, the scaling regime in \cite{najim2016statistical} covers a more challenging scenario where the amplitudes of the spikes decrease by increasing $M$ such that identifying them becomes more and more challenging.
	In this section, we will first state the results that follow from the rigorous analysis of this more challenging scaling regime in \cite{najim2016statistical}. 
	In wireless applications, in contrast, the spike coefficients $\{c_k: k=1,\ldots,r\}$ remain the same, regardless of the number of BS antennas $M$. We adjust to this setup by assuming that the coefficients $\upsilon_k$ are also growing proportionally to $M$ like $\upsilon_k = M c_k$ and then use the results in \cite{najim2016statistical}. This is the only point that makes our analysis in this part semi-rigorous.
	
	In the following, we first focus on the scaling regime in \cite{najim2016statistical}.
	We consider a generic channel vector $\bfh$ with a covariance matrix
	\begin{equation}\label{eq:channel_cov_an}
	\Sigmah =  \Sigmah^d + \Sigmah^c :=  \sum_{k=1}^{r} c_k^{(M)} \av (\xi_k) \av (\xi_k)^\herm + \int_{-1}^1 \gamma_c (\xi) \av (\xi) \av (\xi)^\herm d \xi,
	\end{equation}
	as in \eqref{eq:channel_cov} where $\bfa(\xi)=(1, e^{j \pi \xi}, \dots, e^{j\pi (M-1)\xi})^\transp$ denotes the array response vector of the ULA. We denote the set of noisy samples by $\{\bfy(s)=\bfh(s)+ \bfz(s): s=1,\ldots,N\}$ and their covariance matrices by $\Sigmam_\bfy=\Sigmam_\bfh + N_0 \bfI$, where $N_0$ is the noise variance.
	Let $\lambda_{1,M}\geq \ldots \geq \lambda_{M,M}$ denote the singular values of  $\Sigmam_\bfy$. Note that in this section we consider an asymptotic analysis where $M$ approaches infinity, thus, we use the notation $\lambda_{k,M}$ for $k=1,\ldots,M$, to illustrate the explicit dependence of the $M$ singular values of $\Sigmam_\bfy$ on $M$. The following proposition shows that as $M$ grows, the $r$ largest singular values of $\Sigmam_\bfy$ ``escape" from the rest of the singular values and converge to fixed values as $M\to \infty$. 
	\begin{proposition}[Escape of the $r$ largest singular values of the covariance matrix $\Sigmam_\bfy$]\label{prop:sep_asymp}
		Consider $\gamma(\xi)$ as before with a discrete part $\gamma_d(\xi)=\sum_{k=1}^r c_k^{(M)} \delta(\xi-\xi_k)$ with $c_k^{(M)}=\frac{\upsilon_k}{M}$ and a continuous part $\gamma_c(\xi)$, and let $\Sigmam_\bfh$ be the covariance matrix generated by $\gamma(\xi)$ as in \eqref{eq:channel_cov_an} and let $\Sigmam_\bfy=\Sigmam_\bfh + N_0 \bfI$. 
		If $\upsilon_k+\gamma_c (\xi_k) > \Vert \gamma_c\Vert_\infty $ for all $k=1,\ldots,r,$ then 
		\begin{align}
		\lambda_{k,M}\underset{M\to \infty}{\longrightarrow} \lambda_k =\upsilon_k+\gamma_c (\xi_k)+N_0, \label{szego_res}
		\end{align}
		while $\lim \sup_{M\to \infty} \lambda_{r+1,M} \le \Vert \gamma_c \Vert_\infty + N_0$ (note that $\|\gamma_c\|_\infty:=\sup_{\xi \in [-1,1]} \gamma_c(\xi)$). \hfill $\square$
	\end{proposition}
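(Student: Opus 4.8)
The plan is to view $\Sigmam_\bfy=\Sigmam_0+\Sigmah^d$ as a rank-$r$ perturbation of the deterministic matrix $\Sigmam_0:=\Sigmah^c+N_0\bfI$, to bound $\lambda_{r+1,M}$ by soft means, and to pin down the top $r$ eigenvalues via the secular equation attached to this perturbation, using crucially that $\Sigmah^c$ is Toeplitz: since $[\Sigmah^c]_{m,n}=\int_{-1}^1\gamma_c(\xi)e^{j\pi(m-n)\xi}\,d\xi$ depends only on $m-n$, the matrix $\Sigmah^c$ is the order-$M$ Hermitian Toeplitz matrix generated by $\gamma_c$.

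First I would establish the upper bound on $\lambda_{r+1,M}$. The matrix $\Sigmah^d=\sum_{k=1}^rc_k^{(M)}\av(\xi_k)\av(\xi_k)^\herm$ is positive semidefinite of rank at most $r$, so Weyl's inequality (equivalently, Cauchy interlacing for a rank-$r$ perturbation) gives $\lambda_{r+1,M}=\lambda_{r+1}\!\big(\Sigmam_0+\Sigmah^d\big)\le\lambda_1(\Sigmam_0)=\lambda_{\max}(\Sigmah^c)+N_0$. By the Szeg\H{o} limit theorem for Hermitian Toeplitz matrices, $\lambda_{\max}(\Sigmah^c)\to\|\gamma_c\|_\infty$ as $M\to\infty$ (under the normalization of \eqref{eq:channel_cov_an}), whence $\limsup_M\lambda_{r+1,M}\le\|\gamma_c\|_\infty+N_0$.

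Next, to locate the top $r$ eigenvalues, I fix $\lambda\in(\|\gamma_c\|_\infty+N_0,\infty)$, so that $\Sigmam_0-\lambda\bfI$ is invertible for all large $M$, and set $\Am:=[\av(\xi_1),\dots,\av(\xi_r)]$ and $\Dm_M:=\diag(c_1^{(M)},\dots,c_r^{(M)})=\tfrac1M\diag(\upsilon_1,\dots,\upsilon_r)$. The Sylvester determinant identity gives
\[
\det(\Sigmam_\bfy-\lambda\bfI)=\det(\Sigmam_0-\lambda\bfI)\cdot\det\!\big(\bfI_r+\Dm_M\,\Am^\herm(\Sigmam_0-\lambda\bfI)^{-1}\Am\big),
\]
so the eigenvalues of $\Sigmam_\bfy$ lying in $(\lambda_1(\Sigmam_0),\infty)$ are exactly the zeros there of $\lambda\mapsto\det\!\big(\bfI_r+\Dm_M\Am^\herm(\Sigmam_0-\lambda\bfI)^{-1}\Am\big)$. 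The $(k,l)$ entry of the $r\times r$ matrix in the last determinant equals $\upsilon_k\cdot\tfrac1M\av(\xi_k)^\herm(\Sigmam_0-\lambda\bfI)^{-1}\av(\xi_l)$, and the key input is the Szeg\H{o}-type localization of resolvent quadratic forms of Toeplitz matrices:
\[
\tfrac1M\av(\xi)^\herm(\Sigmam_0-\lambda\bfI)^{-1}\av(\xi)\;\underset{M\to\infty}{\longrightarrow}\;\frac{1}{\gamma_c(\xi)+N_0-\lambda},\qquad\tfrac1M\av(\xi_k)^\herm(\Sigmam_0-\lambda\bfI)^{-1}\av(\xi_l)\;\underset{M\to\infty}{\longrightarrow}\;0\ \ (\xi_k\neq\xi_l),
\]
locally uniformly in $\lambda$ on $(\|\gamma_c\|_\infty+N_0,\infty)$. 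Hence the secular determinant converges to $\prod_{k=1}^r\big(1+\upsilon_k/(\gamma_c(\xi_k)+N_0-\lambda)\big)$.

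Finally I would read off the roots and transfer the convergence. For each $k$, the scalar factor $\lambda\mapsto 1+\upsilon_k/(\gamma_c(\xi_k)+N_0-\lambda)$ is continuous and strictly increasing on $(\|\gamma_c\|_\infty+N_0,\infty)$, tends to $1$ at $+\infty$, and at the left endpoint equals $1+\upsilon_k/(\gamma_c(\xi_k)-\|\gamma_c\|_\infty)$, which is negative exactly when $\upsilon_k+\gamma_c(\xi_k)>\|\gamma_c\|_\infty$; in that case its unique zero is $\lambda_k=\upsilon_k+\gamma_c(\xi_k)+N_0$. A Hurwitz / continuity-of-roots argument applied to the locally uniformly convergent, analytic-in-$\lambda$ secular determinants then yields, for $M$ large, exactly one zero near each simple $\lambda_k$ converging to it; combined with the rank-$r$ interlacing bound of the second paragraph, which forbids $\Sigmam_\bfy$ from having more than $r$ eigenvalues above $\|\gamma_c\|_\infty+N_0$, this forces $\lambda_{k,M}\to\lambda_k=\upsilon_k+\gamma_c(\xi_k)+N_0$. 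The hard part will be the Szeg\H{o}-type localization estimate for the resolvent quadratic forms, together with the uniformity in $\lambda$ needed to push the convergence through to the roots --- precisely the Toeplitz-operator analysis of \cite{najim2016statistical}; the determinant identity, the Weyl interlacing bound, and the continuity of roots are routine.
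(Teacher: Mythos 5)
Your outline is correct and, as far as one can tell, follows the same route as the source the paper leans on: the paper itself gives no proof of Proposition~\ref{prop:sep_asymp} (the appendix explicitly defers all proofs to \cite{najim2016statistical}), and the standard argument for such spiked--Toeplitz models is exactly your decomposition $\Sigmam_\bfy=(\Sigmah^c+N_0\bfI)+\Sigmah^d$, the Weyl/interlacing bound $\lambda_{r+1,M}\le\lambda_{\max}(\Sigmah^c)+N_0\le\|\text{symbol}\|_\infty+N_0$, and the rank-$r$ secular equation whose limit factors as $\prod_k\bigl(1+\upsilon_k/(\gamma_c(\xi_k)+N_0-\lambda)\bigr)$. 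The root computation, the sign check at the left endpoint (which reproduces the hypothesis $\upsilon_k+\gamma_c(\xi_k)>\|\gamma_c\|_\infty$ exactly), and the Hurwitz transfer are all sound. Two caveats. First, the entire weight of the proof rests on the resolvent localization $\tfrac1M\av(\xi_k)^\herm(\Sigmam_0-\lambda\bfI)^{-1}\av(\xi_l)\to\delta_{kl}/(\gamma_c(\xi_k)+N_0-\lambda)$, locally uniformly in $\lambda$; you correctly flag this as the hard part and defer it, but note that it is not free --- it needs $\gamma_c$ to be (at least) continuous at each $\xi_k$ for the ``local symbol value'' $\gamma_c(\xi_k)$ to be meaningful, a regularity hypothesis that neither you nor the paper states. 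Second, with the paper's own conventions ($\xi\in[-1,1]$, $[\av(\xi)]_m=e^{j\pi m\xi}$), the Toeplitz symbol of $\Sigmah^c$ is $2\gamma_c$ rather than $\gamma_c$ (e.g.\ $\tfrac1M\av(\xi_k)^\herm\Sigmah^c\av(\xi_k)\to 2\gamma_c(\xi_k)$ by the Fej\'er-kernel computation), so the stated limits hold only after fixing the normalization; this is an inconsistency inherited from the paper, and your parenthetical ``under the normalization of \eqref{eq:channel_cov_an}'' is the right place to absorb it, but it deserves to be made explicit.
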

	
	This proposition can be interpreted as follows: if for all $k=1,\ldots,r$, the coefficient of the spike plus the value of the continuous  component $\gamma_c(\xi)$ at the location of the spike is greater than the supremum of the continuous part $\gamma_c$ over the whole set of AoAs (that is $\Vert \gamma_c\Vert_\infty $), then the $r$ largest singular values converge  according to \eqref{szego_res} to a value larger than $\|\gamma_c\|_\infty + N_0$ (due to the assumption of the proposition $\upsilon_k + \gamma_c(\xi_k) > \|\gamma_c\|_\infty$) while the rest of the singular values are upper-bounded by $\Vert \gamma_c\Vert_\infty + N_0$. 
	As a result, by increasing the dimension $M$, we can evidence a nice separation between the first $r$ singular values and the remaining $M-r$ ones, which can exploited to identify the number of spikes $r$. 
	
	Unfortunately, Proposition \ref{prop:sep_asymp} is not directly applicable in our case since we have access only to the sample covariance of the noisy samples, namely, $\wh{\Sigmam}_\bfy=\frac{1}{N} \sum_{s=1}^N \bfy(s) \bfy(s)^\herm$, rather than their true covariance matrix $\Sigmam_\bfy$. 
	Fortunately, this result can be modified to work also for $\wh{\Sigmam}_\bfy$ provided that the number of available signal samples for covariance estimation $N$ is sufficiently large. To characterize this rigorously, we consider an asymptotic regime where the number of samples $N$ grows proportionally to the number of antennas $M$  such that $\zeta_M:=\frac{M}{N} \to \zeta>0$ as $M\to \infty$. Of course, in practice $M$ is always limited but this asymptotic scaling law gives a flavor of conditions under which the support recovery is feasible for the discrete spikes. 
	
	Let $\wh{\lambda}_{1,M}, \dots, \wh{\lambda}_{M,M}$ be the singular values of $\wh{\Sigmam}_\bfy$ 
	and let us define the empirical distribution of these singular values by 
	\[ \widehat{\gamma}_M(\lambda) = \frac{1}{M} \sum_{k=1}^M \delta (\lambda - \widehat{\lambda}_{k,M}). \]
	Then one can show that  \cite{baik2006eigenvalues} almost surely (a.s.) as $M\to \infty$
	\[ \widehat{\gamma}_M  \overset{\text{weakly}}{\longrightarrow} \gamma, \]
	where $\gamma$ is a deterministic density  characterized by its Stieltjes transform as
	\begin{align}
	\varepsilon(z) = \int_{\bR} \frac{d\gamma (\lambda)}{\lambda - z}, \label{eps_relation1}
	\end{align}
	where $\varepsilon(z)$ is a function that satisfies the following fixed-point equation 
	\begin{align}
	\varepsilon(z) = \int_{\bR} \frac{d\nu (\lambda)}{\lambda (1-\zeta -\zeta\, z\, \varepsilon(z))-z}, \label{eps_relation2}
	\end{align}
	for all $z\in \bC \backslash\text{supp} (\gamma)$, where $\text{supp} (\gamma)$ denotes the support of the distribution $\gamma$. In  \eqref{eps_relation2},  $\nu (\lambda)$ is the asymptotic distribution of the singular values of the true covariance $\Sigmam_\bfy=\Sigmah+ N_0 \bfI$, namely, $\nu_M = \frac{1}{M} \sum_{k=1}^M \delta (\lambda - \lambda_{k,M})\overset{\text{weakly}}{\longrightarrow} \nu$, where $\{\lambda_{k,M}: k \in [M]\}$ denotes the set of singular values of $\Sigmam_\bfy=\Sigmam_\bfh + N_0 \bfI$ as before. It is worthwhile to mention that although the singular values of $\Sigmam_\bfy$ and $\widehat{\Sigmam}_\bfy$ have a well-defined limit as $\gamma(\lambda)$ and $\nu(\lambda)$, these two limit distributions are different from each other for any $\zeta>0$ and approach each other as $\zeta \to 0$, namely, when the number of samples $N$ becomes tremendously larger than $M$, where in that case $\widehat{\Sigmam}_\bfy$ also converges to $\Sigmam_\bfy$.
	
	To extend the separation condition proved in  Proposition \ref{prop:sep_asymp} for the true covariance $\Sigmam_\bfy$ to the sample covariance $\widehat{\Sigmam}_\bfy$, we need to study $\nu(\lambda)$ further.
	From Szeg\"o's theorem \cite{grenander1958toeplitz}, it is
	well-known that $\nu$ is given by the distribution of the random variable $\gamma_c(\bar{\xi}) + N_0$ when $\bar{\xi}\sim \clU([-1,1])$ is uniformly distributed  in $[-1,1]$. Note that since the random variable $\gamma_c(\bar{\xi}) + N_0$ is upper bounded by $\|\gamma_c\|_\infty + N_0$, the support of the distribution $\nu$ lies always in the interval $\big[0, \|\gamma_c\|_\infty +N_0\big ]$, and in particular $\max(\supp(\nu)) =\|\gamma_c\|_\infty + N_0$. This  implies that the function $\omega \mapsto \phi(\omega)$ defined by 
	\begin{align}
	\phi (\omega) = \omega \left( 1- \zeta\int_{\bR} \frac{\lambda}{\lambda - \omega } d\nu (\lambda) \right), \label{phi_func}
	\end{align}
	is well-defined for all $\omega \in (\|\gamma_c\|_\infty + N_0, +\infty)$. Note that $\phi(\omega)$ is a continuous and differentiable (of any order) function  in this interval. Moreover, $\phi(\omega) \to \infty$ as $\omega \to \|\gamma_c\|_\infty + N_0$ from the right and $\lim _{\omega \to \infty} \phi(\omega)=\infty$. Thus, $\phi(\omega)$ should have a local minimum $\omega_0 \in (\|\gamma_c\|_\infty + N_0, \infty)$. A direct computation  shows that 
	\begin{align}
	\phi^{''}(\omega)= \int \frac{2\zeta \lambda^2}{(\omega-\lambda)^3} d\nu(\lambda),
	\end{align}
	which is alway positive in the interval $(\|\gamma_c\|_\infty+N_0, \infty)$. Hence, $\phi(\omega)$ is a convex function in this interval and $\omega_0$ is the unique minimizer of $\phi(\omega)$. It would be also interesting to investigate the dependence of $\omega_0$ on the asymptotic sampling ratio $\zeta$. Note that $\omega_0$ is the unique minimizer of $\phi(\omega)$, thus, it satisfies $\phi'(\omega_0)=0$. Taking the derivative of $\phi(\omega)$, we can write the condition $\phi'(\omega_0)=0$ as
	\begin{align}
	\int \frac{\lambda^2}{(\omega_0-\lambda)^2} d\nu(\lambda) = \frac{1}{\zeta}.
	\end{align}
	We can simply check that $\omega_0(\zeta)$ is an increasing function of $\zeta$. In particular, by increasing the number of samples $N$, thus, letting $\zeta \to 0$, we have $\frac{1}{\zeta} \to \infty$, which is satisfied provided that $\omega_0$ approaches the boundary value $\|\gamma_c\|_\infty + N_0$. Similarly, we can check that by decreasing the number of samples $N$ in a scaling regime where $\zeta \to \infty$, we obtain $\omega_0=\infty$. In brief, $\omega_0$ ranges monotonically in the interval $(\|\gamma_c\|_\infty + N_0, +\infty)$ for $\zeta \in (0, +\infty)$. 
	
	The following theorem shows that, similar to the escape of the $r$ largest singular values in the spectrum of the true covariance $\Sigmam_\bfy$ illustrated in Proposition \ref{prop:sep_asymp}, the $r$ largest singular values of the sample covariance $\widehat{\Sigmam}_\bfy$ escape from the rest of its spectrum if a ``separation" condition is satisfied. This separation condition can be formulated in terms of $\omega_0$ as follows.
	\begin{theorem}\label{thm:spike_num}
		Let $\{\lambda_{k,M}: k =1,\ldots,M\}$ denote the set of singular values of $\Sigmam_\bfy$ as before and suppose that
		\begin{equation}\label{eq:sep_condition}
		\lambda_{r,M} > \omega_0. 
		\end{equation}
		Then, for $k=1,\ldots,r$, with probability one as $M\to \infty$ we have
		\begin{equation}
		\widehat{\lambda}_{k,M} \to \phi (\lambda_k),
		\end{equation}
		whereas $\widehat{\lambda}_{r+1,M} \to \phi (\omega_0)<\phi (\lambda_{r,M})$ (as $\omega_0$ is the maximizer of $\phi(\omega)$). \hfill $\square$
	\end{theorem}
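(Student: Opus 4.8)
The plan is to recognize $\widehat{\Sigmam}_\bfy$ as the sample covariance of a \emph{generalized spiked population model} with population covariance $\Sigmam_\bfy$ and to invoke the first-order theory for such models --- the deformed Marchenko--Pastur law together with the exact-separation principle --- exactly as developed in \cite{najim2016statistical} (see also \cite{baik2006eigenvalues}). First I would write $\bfy(s)=\Sigmam_\bfy^{1/2}\bfx(s)$ with the $\bfx(s)$ i.i.d.\ standard complex Gaussian, so that $\widehat{\Sigmam}_\bfy=\Sigmam_\bfy^{1/2}\big(\tfrac1N\bfX\bfX^\herm\big)\Sigmam_\bfy^{1/2}$ with $\bfX=[\bfx(0),\dots,\bfx(N-1)]$. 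By Proposition~\ref{prop:sep_asymp}, for $M$ large $\Sigmam_\bfy$ has exactly $r$ eigenvalues $\lambda_{1,M}\ge\cdots\ge\lambda_{r,M}$ converging to limits $\lambda_k>\|\gamma_c\|_\infty+N_0$, while $\limsup_M\lambda_{r+1,M}\le\|\gamma_c\|_\infty+N_0$ and the bulk of its spectrum has limiting law $\nu$ (the law of $\gamma_c(\bar\xi)+N_0$, $\bar\xi\sim\clU([-1,1])$), with $\supp(\nu)\subseteq[0,\|\gamma_c\|_\infty+N_0]$. Thus $\Sigmam_\bfy$ is a uniformly bounded, rank-$r$ spiked perturbation of a matrix whose spectrum fills $\supp(\nu)$, which is precisely the structure needed to enter the spiked-model machinery.

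\textbf{Bulk.} Next I would establish that $\widehat\gamma_M$, the empirical spectral distribution of $\widehat{\Sigmam}_\bfy$, converges a.s.\ to the deterministic $\gamma$ of \eqref{eps_relation1}--\eqref{eps_relation2} --- this is exactly the Bai--Silverstein/Marchenko--Pastur statement recalled just before \eqref{eps_relation2} --- and identify the right edge of $\supp(\gamma)$. The general principle is that the edges of $\supp(\gamma)$ are the values $\phi(\omega)$ taken at those $\omega\notin\supp(\nu)$ with $\phi'(\omega)=0$. Since, by the convexity computation given around \eqref{phi_func}, the restriction of $\phi$ to $(\|\gamma_c\|_\infty+N_0,+\infty)$ is strictly convex with unique critical point $\omega_0$, the right-most bulk edge of $\widehat{\Sigmam}_\bfy$ is $\phi(\omega_0)$.

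\textbf{Spikes and conclusion.} Then I would apply the generalized spiked-model first-order result: since the limits $\lambda_1\ge\cdots\ge\lambda_r$ are bounded away from $\supp(\nu)$, each population spike with $\lambda_k>\omega_0$ produces exactly one outlier of $\widehat{\Sigmam}_\bfy$ at $\phi(\lambda_k)$, i.e.\ $\widehat\lambda_{k,M}\to\phi(\lambda_k)$ a.s., whereas a spike with $\lambda_k\le\omega_0$ would be absorbed into the bulk edge $\phi(\omega_0)$. Under hypothesis \eqref{eq:sep_condition} we have $\lambda_{k,M}\ge\lambda_{r,M}>\omega_0$ for every $k\le r$, so all $r$ spikes yield genuine outliers. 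The exact-separation theorem then guarantees that a.s.\ no further eigenvalue of $\widehat{\Sigmam}_\bfy$ exceeds the right edge of $\supp(\gamma)$; as $\lambda_{r+1,M}$ is a bulk eigenvalue ($\le\|\gamma_c\|_\infty+N_0<\omega_0$) it creates no outlier, hence $\widehat\lambda_{r+1,M}\to\phi(\omega_0)$. Finally, $\phi'(\omega_0)=0$ together with $\phi''>0$ on $(\|\gamma_c\|_\infty+N_0,\infty)$ forces $\phi'>0$ on $(\omega_0,\infty)$, so $\phi$ is strictly increasing there and $\phi(\lambda_k)\ge\phi(\lambda_{r,M})>\phi(\omega_0)$, which is the asserted strict gap between the $r$ largest sample eigenvalues and the rest.

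\textbf{Main difficulty.} The hard part is that this is a \emph{generalized} spiked model: the spike directions $\av(\xi_k)$ are neither mutually orthogonal nor orthogonal to the bulk eigenspace, and the bulk $\nu$ is non-degenerate, so the classical Baik--Ben Arous--P\'ech\'e phase-transition statements do not apply verbatim; one must use the version valid for an arbitrary bounded population covariance and carefully check its hypotheses --- chiefly the uniform separation of the top-$r$ population eigenvalues from $\supp(\nu)$ (exactly what Proposition~\ref{prop:sep_asymp} supplies) and the threshold condition $\lambda_{r,M}>\omega_0$, which is strictly stronger than being a population spike because $\omega_0>\|\gamma_c\|_\infty+N_0$. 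A further, bookkeeping-level point is translating between the scaling $c_k^{(M)}=\upsilon_k/M$ used in \cite{najim2016statistical} and the wireless normalization $\upsilon_k=Mc_k$; this merely relocates the limiting spike positions and leaves the separation argument intact.
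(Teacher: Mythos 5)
Your plan follows essentially the same route as the paper, which in fact gives no proof of this theorem but defers entirely to \cite{najim2016statistical}: the argument there is precisely the generalized spiked-model machinery you outline (deformed Marchenko--Pastur bulk, Silverstein--Choi edge characterization via the critical point $\omega_0$ of $\phi$, exact separation, and the spike map $\lambda_k\mapsto\phi(\lambda_k)$ for $\lambda_k>\omega_0$). Your reconstruction is correct; note only that your use of $\phi'(\omega_0)=0$ with $\phi''>0$, making $\omega_0$ the \emph{minimizer} of $\phi$ on $(\|\gamma_c\|_\infty+N_0,\infty)$ and $\phi$ increasing beyond it, is consistent with the paper's own derivation around \eqref{phi_func} and quietly corrects the parenthetical ``maximizer'' in the theorem statement.
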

	
	The separation condition $\eqref{eq:sep_condition}$ implicitly depends on the parameters of the spike elements $\{\upsilon_k: k =1,\ldots,r\}$ as well as the continuous part of the ASF $\gamma_c$ (through the function $\phi(\omega)$ defined in \eqref{phi_func}) and in particular on the asymptotic sampling ratio $\zeta$.
	As a sanity check by increasing the number of samples $\zeta \to 0$, and $\omega_0(\zeta) \to \|\gamma_c\|_\infty + N_0$, and the separation condition in Theorem \ref{thm:spike_num} becomes the same as that in Proposition \ref{prop:sep_asymp}, which makes sense since for large number of samples the sample covariance matrix $\wh{\Sigmam}_\bfy$ converges to the original covariance matrix $\Sigmam_\bfy$. Moreover, as $\omega_0(\zeta) > \|\gamma_c\|_\infty + N_0$ for all $\zeta$, Theorem \ref{thm:spike_num} requires a stronger separation condition than Proposition \ref{prop:sep_asymp}, which is the cost one needs to pay for not having the original  covariance matrix but the sample covariance matrix. 
	
	Overall, if the separation condition \eqref{eq:sep_condition} is satisfied, we are able to consistently detect the number of spikes by identifying the gap between the singular values. In particular, 
	\[ \wh{r}_M = \max \{ k: \widehat{\lambda}_{k,M} >\phi (\omega_0)+\epsilon  \}\overset{a.s.}{\longrightarrow} r, \]
	as $M\to \infty$ for  any $\epsilon \in \big(0, \phi (\lambda_{r,M})-\phi (\omega_0)\big)$.
	
	Eventually, the following theorem proves the consistency of the MUSIC estimator in the present context. 
	\begin{theorem}[Consistency of MUSIC]
		If the separation condition \eqref{eq:sep_condition} holds, then
		\begin{equation}
		M\, (  \wh{\xi}_k - \xi_k  ) \overset{a.s.}{\longrightarrow} 0, \ k=1,\dots, r,
		\end{equation}
		as $M\to \infty$, where $ \wh{\xi}_k$ denotes the AoA estimate of  the $k$-th dominant minima of \eqref {eq:pseudo_spectrum_1}. \hfill $\square$
	\end{theorem}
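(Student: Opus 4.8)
\emph{Proof plan.} The idea is to combine the spectral ``escape'' facts already established with a localized, $1/M$-resolution analysis of the pseudo-spectrum \eqref{eq:pseudo_spectrum_1} around each true spike, and then pass to the limit of the minimizers. Concretely, I would proceed in three stages: (1) reduce to analysing the noise-subspace quadratic form with the correct subspace dimension; (2) derive an almost-sure deterministic equivalent for that quadratic form on the natural $1/M$ scale; (3) deduce convergence of the minimizers.

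First, under the separation condition \eqref{eq:sep_condition}, Theorem~\ref{thm:spike_num} already gives $\wh{\lambda}_{k,M}\to\phi(\lambda_k)$ for $k\le r$ and $\wh{\lambda}_{r+1,M}\to\phi(\omega_0)<\phi(\lambda_r)$, so the gap-detection rule recovers the number of spikes almost surely and, for all large $M$, $\bfU_\text{noi}$ spans exactly the orthogonal complement of the span of the top-$r$ eigenvectors of $\wh{\Sigmam}_\bfy$. Hence it suffices to study $\wh{\eta}_M(\xi)=\|\bfU_\text{noi}^\herm\bfa(\xi)\|^2=\bfa(\xi)^\herm(\Id-\wh{\Pi}_r)\bfa(\xi)$, where $\wh{\Pi}_r$ is the projector onto the $r$ escaped eigenvectors.

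The core step is the second one. I would write $\wh{\Pi}_r=\frac{1}{2\pi\mathrm{i}}\oint_\Gamma(\wh{\Sigmam}_\bfy-z\Id)^{-1}\,dz$ for a fixed contour $\Gamma$ enclosing $\{\phi(\lambda_1),\dots,\phi(\lambda_r)\}$ and excluding $\phi(\omega_0)$, and replace the resolvent by its deterministic equivalent governed by the fixed-point relation \eqref{eps_relation2}, uniformly for $z\in\Gamma$. Because $\bfa(\xi)$ is a ULA Vandermonde vector, the bilinear forms $\bfa(\xi)^\herm(\wh{\Sigmam}_\bfy-z\Id)^{-1}\bfa(\xi')$ are controlled via Szeg\H{o}-type (Toeplitz) asymptotics; after the rescaling $\xi=\xi_k+t/M$ this should yield, almost surely and locally uniformly in $t\in\bR$,
\[
\tfrac{1}{M}\,\wh{\eta}_M\!\left(\xi_k+\tfrac{t}{M}\right)\ \longrightarrow\ q_k(t),
\]
with $q_k$ deterministic, continuous, $q_k(0)=0$, uniquely minimized at $t=0$, and $q_k(t)\sim\kappa_k t^2$ near $0$ for some $\kappa_k>0$; the separation condition \eqref{eq:sep_condition} is precisely what prevents the diffuse (continuous-ASF) contribution from producing a competing deep minimum, so that $\xi_k$ stays a strict isolated dominant minimizer in the limit. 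I expect this locally uniform almost-sure convergence to be the main obstacle: it needs the spiked random-matrix analysis of $\wh{\Pi}_r$ together with a concentration-plus-equicontinuity argument --- a covering net in $t$ plus a Lipschitz bound on $t\mapsto\frac{1}{M}\wh{\eta}_M(\xi_k+t/M)$, available because the normalized array manifold and its derivative are uniformly bounded on the relevant scale --- to upgrade pointwise convergence to uniform convergence on compact sets.

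Finally, fix $k$. By the first stage, for large $M$ the $k$-th dominant minimizer $\wh{\xi}_k$ of $\wh{\eta}_M$ lies in a fixed neighbourhood of $\xi_k$, so $\wh{t}_k:=M(\wh{\xi}_k-\xi_k)$ is eventually the minimizer of $t\mapsto\frac{1}{M}\wh{\eta}_M(\xi_k+t/M)$ over a window widening to all of $\bR$. Since these functions converge locally uniformly to $q_k$, whose unique minimizer is $t=0$, a standard argmin/epi-convergence argument forces $\wh{t}_k\to0$ almost surely, i.e. $M(\wh{\xi}_k-\xi_k)\overset{a.s.}{\longrightarrow}0$. The complete technical details for the ULA are those of \cite{najim2016statistical}.
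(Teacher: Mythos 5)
The paper does not actually prove this theorem: it is stated without proof, with the appendix explicitly deferring to \cite{najim2016statistical} for all proofs in that section. Your plan reconstructs essentially the strategy of that reference --- spectral separation to fix the noise-subspace dimension, a contour-integral/deterministic-equivalent analysis of the projector onto the escaped eigenvectors, Szeg\H{o}-type control of the Vandermonde bilinear forms on the $1/M$ scale, and an argmin/epi-convergence step --- so it is the same approach as the paper's (outsourced) proof, with the genuinely hard step (locally uniform a.s.\ convergence of the rescaled pseudo-spectrum) correctly identified and, like the paper, ultimately delegated to \cite{najim2016statistical}. One detail to correct: in the spiked regime the estimated signal eigenvectors are not asymptotically aligned with the true ones, so $\tfrac{1}{M}\wh{\eta}_M(\xi_k)$ converges to a strictly positive limit (of the form $1-h_k$ with $h_k\in(0,1)$ measuring the asymptotic cosine-squared between estimated and true signal subspaces), not to $0$; this is harmless for your argument, since the argmin step only requires $t=0$ to be the unique minimizer of $q_k$, but the claim $q_k(0)=0$ as written is false.
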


	\vspace{2mm}
	
  In brief, the theoretical analysis presented in this section, illustrates that in the (challenging) scaling regime where the coefficients of the spikes vary as $c_k^{(M)}=\frac{\upsilon_k}{M}$, thus, decrease by increasing $M$, and for any  finite fraction $\zeta_M=\frac{M}{N} \to \zeta \in (0, \infty)$, one is able to estimate consistently the number and the support of the spikes through MUSIC algorithm provided that the weights of the parameters of the  spikes $\{\upsilon_k: k =1,\ldots,r\}$ are sufficiently large such that they stand out of the amplitude of the continuous part $\gamma_c$. The degree up to which these weights should be large depends on $\zeta$, where in the best case of very large number of samples, $N\gg M$ such that $\zeta \approx 0$, one needs at least $\upsilon_k+ \gamma_c(\xi_k) \geq \|\gamma_c\|_\infty$. In general, for all other values of $\zeta$, the separability condition is satisfied provided that the condition in \eqref{eq:sep_condition} is fulfilled where again, intuitively speaking, the weights $\{\upsilon_k: k =1,\ldots,r\}$ should be large enough to make sure that the first $r$ singular values of $\Sigmam_\bfy$ (which  of course grow by increasing the weights $\{\upsilon_k: k =1,\ldots,r\}$) pass the threshold $\omega_0$ illustrated in \eqref{eq:sep_condition}.
	
	It is worthwhile here to pose these results in the semi-rigorous setting we already discussed. More specifically, since in our case the amplitude of the spikes  $\{c_k: k =1,\ldots,r\}$ remain constant (rather than decreasing with $M$), we can follow the same reasoning by assuming that the  coefficients $\upsilon_k$ grow proportional to $M$ as $\upsilon_k=M c_k$. 
	
	An important point of analysis in  \cite{najim2016statistical} summarized in this section is that for any   asymptotic sampling ratio $\frac{M}{N} \to \zeta \in (0, \infty)$, no matter how small $\zeta$ may be, we can make the detection of all $r$ spikes, namely, their number $r$ and also their support, feasible by increasing the coefficients $\{\upsilon_k: k =1,\ldots,r\}$ until the separability condition in \eqref{eq:sep_condition} is fulfilled. Let us first illustrate this point step by step. First note that the measure $\nu(\lambda)$ depends only on the continuous part $\gamma_c$, thus, is not affected by changing the weights $\{\upsilon_k: k =1,\ldots,r\}$. Therefore, for a fixed $\zeta\in (0, \infty)$, the function $\phi(\omega)$ and as a result the parameter $\omega_0$ are not affected by changing $\{\upsilon_k: k =1,\ldots,r\}$. 
	Second, by dropping the contribution of the continuous part $\gamma_c$ from $\Sigmam_\bfy$, we can easily check that the first $r$ singular values of $\Sigmam_\bfy$ are larger than the first $r$ singular values of the matrix 
	\begin{align}
	\Sigmam_\bfh^d + N_0\bfI =\sum_{k=1}^r \frac{\upsilon_k}{M} \bfb(\xi_k)  \bfb(\xi_k)^\herm + N_0\bfI.
	\end{align}
	A direct calculation shows that as $M \to \infty$, and as a result $\frac{\inp{ \bfb(\xi_k)}{ \bfb(\xi_{k'})}}{M} \to 0$ for $k \not = k'$,  the $r$ largest singular values  approach $\{\upsilon_k + N_0: k =1,\ldots,r\}$ which would satisfy \eqref{eq:sep_condition} by increasing $\{\upsilon_k: k =1,\ldots,r\}$ (as $\omega_0$ is not affected by changing $\{\upsilon_k: k =1,\ldots,r\}$). In brief, we can say that the separability condition \eqref{eq:sep_condition} is fulfilled if
	\begin{align}
	\min \{ \upsilon_k: k =1,\ldots,r\} \geq \eta(\zeta, \gamma_c),
	\end{align}
	where $\eta(\zeta, \gamma_c)$ is a finite threshold that depends on the sampling ratio $\zeta$ and $\gamma_c$. As a result, assuming that $\upsilon_k=M\, c_k $ grows proportionally to $M$, this condition would be satisfied for any finite $\zeta \in (0, \infty)$ and for any practically relevant $\gamma_c$ provided that  $M$ is sufficiently large. 
	
	We would like to emphasize that the importance of this (semi-rigorous) result is that, it implies that no matter how small the spike amplitudes $\{c_k: k =1,\ldots,r\}$ are and no matter how small the number of  samples $N$ is compared with $M$ (of course provided that the asymptotic sampling ratio $\zeta$ remains finite), MUSIC algorithm would be able to recover all the spikes if $M$ is sufficiently large. Looking from another perspective, of course we know that if we do not have enough number of samples $N$ (namely if $\zeta$ is quite large but finite), it may be impossible information-theoretically to estimate some generic covariance matrices. However, even in those scenarios, MUSIC algorithm would be asymptotically consistent. This provides a strong guarantee that  MUSIC algorithm works perfectly without incurring any sample complexity for the covariance estimation.

	{\small
		\bibliographystyle{IEEEtran}
		\bibliography{references}
	}
	
\end{document}